\documentclass[11pt]{amsart}
\usepackage{amsmath}
\usepackage{amsfonts}
\usepackage{amssymb}
\usepackage{bbm}
\usepackage{fullpage}
\usepackage{verbatim}
\usepackage{color}
\usepackage{datetime}
\usepackage{cite}

\numberwithin{equation}{section}

\newcommand{\rr}{\m{R}}

\newcommand{\nn}{\m{N}}
\newcommand{\cc}{\m{C}}
\newcommand{\zz}{\m{Z}}

\newcommand{\1}{\mathbbm{1}} 
\newcommand{\spec}[1]{\mathrm{sp}(#1)}

\newcommand{\la}{\lambda}
\newcommand{\La}{\Lambda}
\newcommand{\sig}{\sigma}
\newcommand{\ep}{\varepsilon}
\newcommand{\del}{\delta}

\newcommand{\set}[1]{\left\{ #1 \right\} }
\newcommand{\ip}[1]{\langle #1 \rangle}
\newcommand{\norm}[1]{\left \Vert #1 \right\Vert }

\newcommand{\m}[1]{\mathbb{#1}}
\newcommand{\mc}[1]{\mathcal{#1}}
\newcommand{\mf}[1]{\mathfrak{#1}}

\newtheorem{theorem}{Theorem}[section]
\newtheorem{lemma}[theorem]{Lemma}
\newtheorem{proposition}[theorem]{Proposition}
\newtheorem{corollary}[theorem]{Corollary}
\newtheorem{definition}[theorem]{Definition}
\newtheorem{example}[theorem]{Example}

\newcommand{\specc}{\mathrm{sp}}

\newcommand{\diam}[1]{\text{diam}(#1)}

\newcommand{\supp}[1]{\text{supp}(#1)}

\newcommand{\Aut}{\mathrm{Aut}}

\newcommand{\an}{\mathrm{An}}
\newcommand{\crr}{\mathrm{Cr}}

\newcommand{\A}{\mathfrak{A}}

\title{Automorphic equivalence preserves the split property} 
\author{Alvin Moon}
\address{Department of Mathematics. University of California, Davis. Davis, CA. 95616. USA}
\email{asmoon@math.ucdavis.edu}

\begin{document}

\maketitle

\date{\today}
\begin{abstract}
We prove that the split property is a stable feature for spin chain states which are related by composition with $*$-automorphisms generated by power-law decaying interactions. We apply this to the theory of the $\mathbb{Z}_2$-index for gapped ground states of symmetry protected topological phases to show that the $\mathbb{Z}_2$-index is an invariant of gapped classification of phases containing fast-decaying interactions.    
\end{abstract}

\section{Introduction}

Recent studies have rigorously proven the existence of symmetry protected topological (SPT) phase transitions in one dimension using an invariant of smooth, gapped classification known as the $\zz_2$-index \cite{O, T}. The $\zz_2$-index follows a line of investigation of the invariants which arise from symmetries of a quantum spin chain \cite{BN, NT, PTBO, PTBO2}. 

\medskip 

In \cite{O}, it is proven that this index is a well-defined invariant for finite-range interactions, regardless of boundary conditions, and that the index agrees with the matrix product state index defined in \cite{PTBO}. Thus it is concluded that the AKLT interaction belongs to a non-trivial topological phase of finite-range interactions protected by time reversal symmetry. It is also known that extensive but sufficiently small and fast-decaying perturbations of the AKLT interaction on the chain will not move the system out of the phase (e.g. see \cite{MN}). In particular, an SPT phase may contain interactions which are not finite range. 

\medskip 

The objective of this note is to investigate when the $\zz_2$-index is a stable invariant of an SPT phase in one dimension. We prove that under certain hypotheses, including superpolynomial but still subexponential decay of interactions and uniqueness of the gapped ground state, that if an SPT phase contains an interaction with a well-defined $\zz_2$-index, then all interactions in the phase have a well-defined index, and that the index is an invariant of the phase (see Section 3 for the hypotheses). For this, we follow the proof of Ogata for the finite-range case closely, making the necessary and material modifications to handle an unbounded range of interaction. This stability provides further evidence that the $\zz_2$-index detects a true phase transition between interactions in distinct symmetry protected topological phases. 

\medskip 

A significant mathematical obstruction to assuming weaker decay conditions is in proving that certain gapped ground states of interactions satisfy the \textit{split property}. So far, general results on sufficient conditions for the split property to hold critically use characteristics of finite-range or exponentially decaying one-dimensional interactions, such as boundedness of the entanglement entropy or the validity of Haag duality for the spin chain interactions \cite{M, M3}. We comment on the relationship between split property for translation invariant ground states and Haag duality in Section 2. 

\medskip 

Our main result is that quasi-local deformations of split states preserve the split property. Our proofs make use of Lieb-Robinson bounds on the speed of propagation of time-evolved observables which do not depend on the sizes of support. To the best of our knowledge, the results of this note are the first which generally guarantee the split property for ground states of interactions which do not necessarily decay exponentially.

\subsection{Notations and assumptions}

\medskip

We consider the one-dimensional lattice $(\zz, |\cdot|)$. Let $P_f (\Sigma)$ denote the finite subsets of $\Sigma \subset \zz$. The onsite Hilbert space at $x\in \zz$ is $\mf{H}_x = \cc^d$, where $d\geq 2$ is taken to be independent of $x$ for simplicity. Let $\A_{\set{x}} = M_d(\cc)$ denote the onsite algebra of observables. Local algebras of observables for $X \in P_f(\zz)$ are defined by tensor product:
	\begin{equation}
		\begin{split}
\A_X = \bigotimes _{ x\in X} \A_{\set{x}}.
		\end{split}
	\end{equation} 
We reserve $\La$ as notation for a finite interval of the form $[a,b] \cap \zz$. Let $\A_{loc}$ denote the maximal algebra obtained by inclusion of local algebras, and $\A_\zz$ its closure with respect to the operator norm:
	\begin{equation}
		\begin{split}
\A_{loc} = \bigcup_{ X \in P_f(\zz)} \A_X, \hspace{10mm} \A_\zz = \overline{ \A_{loc}} ^{ \norm{\cdot}}.
		\end{split}
	\end{equation}
Similarly, let $\A_L$ and $\A_R$ denote the $C^*$-algebras obtained from the local algebras of the left and right complementary half-infinite chains, respectively: 
	\begin{equation}
		\begin{split}
\A _L = \overline{ \bigcup _{X \in P_f( (-\infty, 0] ) }\A_X } ^{\norm{\cdot}} \hspace{5mm} (\text{resp. } \A_R).  
		\end{split}
	\end{equation}
We model the interactions between sites of the lattice with \textit{interaction functions} parametrized by a dependence $t \in [0,1]$:
	\begin{gather}
 \Phi( \cdot, t): P_f(\zz)  \to \A _\zz  \\
X  \mapsto \Phi(X,t) = \Phi(X,t)^* \in \A _X .
	\end{gather} 
For regularity, we assume for each $X\in P_f(\zz)$ that the dependence $t \mapsto \Phi(X, t)$ is continuously differentiable. The dynamics $\tau^\La : [0,1] \to \mathrm{Aut}( \A_\La)$ of the model are generated by the family of Hamiltonians:
	\begin{equation}\label{eq:Hamiltonian}
		\begin{split}
H_\La (\Phi,t ) = \sum_{Z \subset \La} \Phi(Z, t),
		\end{split}
	\end{equation} 	 
are continuous in $t$ and satisfy $\tau_0^\La = \mathrm{id}$. For a thorough investigation of properties of $\Phi(\cdot, t)$ and the limit of the family $(\tau^\La)$, the curve  $\tau: [0,1] \to \Aut(\A_\zz)$ of $*$-automorphisms, we refer to Section 3 of \cite{NSY}. In this case, we say $\tau$ are the \textit{quasi-local dynamics generated by} $\Phi(\cdot,t)$.

\medskip 

In this note, we study antilinear symmetries of the spin chain. Precisely,  if $\theta: \A_\zz \to \A_\zz$ is an antilinear $*$-automorphism, we say that $\tau$ is $\theta$-invariant if the generating interactions are fixed by $\theta$:
	\begin{equation}
		\begin{split}
\forall X \in P_f(\zz): \theta\big{(} \Phi(X, t) \big{)} = \Phi(X,t).
		\end{split}
	\end{equation}
Physical considerations require decay of the interaction. To account for the $t$-dependence, we quantify the decay using $\mc{F}$-functions and $\mc{F}$-norms. Precisely, denote:
	\begin{equation}\label{eq:Ffunction}
		\begin{split}
F_\beta(x) = e^{ - h(x) } \frac{1}{ (1+x)^\beta}, ~ \beta>0.
		\end{split}
	\end{equation}	  
where $h: [0,\infty) \to [0,\infty)$ is a non-negative, non-decreasing, subadditive function. We observe that there exists a constant $C_\beta$ such that for any $x,y\in \zz$: $\sum _{z\in \zz} F_\beta( |x-z|) F_\beta(|z-y|) \leq C_\beta F_\beta( |x-y|)$. We refer to $C_\beta$ as the \textit{convolution constant} of $F_\beta$.

\medskip 

The function $F_\beta$ depends on $h$, but we will suppress the $h$-dependence in notation, and when the choice of $\beta>0$ is immaterial, we will suppress the $\beta$-dependence as $F = F_\beta$. Then $\norm{\cdot}_{h,\beta}$ is defined for the family $\Phi(\cdot, t)$ as:
	\begin{equation}
		\begin{split}
\norm{\Phi}_{h,\beta} = \sup _{x,y\in \zz} \sum _{ \substack{ Z \in P_f(\zz)\\ x,y\in Z}} \sup _{t \in [0,1]} \bigg{(} \frac{\norm{\Phi(Z,t)}}{F_\beta(|x-y|)} \bigg{)}.
		\end{split}
	\end{equation} 	
We note that if $\norm{\Phi}_{h,\beta}  <\infty$, then the interaction decays uniformly as a function of the diameter:
	\begin{equation}
		\begin{split}
\norm{ \Phi(X,t)} \leq \norm{\Phi}_{h,\beta}  F_\beta( \diam{X}).
		\end{split}
	\end{equation}
We will denote:
	\begin{equation}
		\begin{split}
\norm{\Phi(Z)}_{[0,1]} = \sup _{t\in [0,1]} \norm{\Phi(Z,t) }.
		\end{split}
	\end{equation}	
Lastly, we state the split property from \cite{M,M3} which will be best suited for our analysis.  
\begin{definition}\label{def:split}
A state $\omega$ of $\A_\zz$ \textbf{satisfies the split property} if there exist states $\omega_L$ and $\omega_R$ of the left and right algebras $\A_{L}, \A _{R}$, respectively, such that $\omega$ is quasi-equivalent to $\omega_L \otimes \omega_R$. 
\end{definition}
For brevity, we will refer to states which satisfy Definition \ref{def:split} as \textit{split} states. The formulation of the split property in Definition \ref{def:split} agrees with that of e.g. \cite{M2,O} when $\omega$ is pure and $\omega_L$ and $\omega_R$ are the restrictions of $\omega$ to the left and right algebras, respectively. There are higher-dimensional generalizations of the split property, such as the distal or approximate split property of \cite{N}; however, we do not comment on whether these are stable. 

\medskip 

We will express the quasi-equivalence relation between states by $\sim$. We consider only factor states, so we recall an asymptotic condition for quasi-equivalence of factor states $\omega$ and  $\varphi$ of the quasi-local algebra $\A_\zz$ (cf. Corollary 2.6.11 in \cite{BR1}): $\omega \sim \varphi$ if and only if for all $\ep>0$, there exists $X_\ep \in P_f(\zz)$ such that $Y\in P_f(\zz)$ and $B \in \A_Y$ with $Y \cap X_\ep = \emptyset$ imply: 
	\begin{equation}
		\begin{split}
| \omega (B) - \varphi(B)| \leq \norm{B} \ep .
		\end{split}
	\end{equation}

\section{Split states}

\subsection{Support-independent Lieb-Robinson bounds }

In the following, we prove special cases of Lieb-Robinson bounds for the integer lattice and certain configurations of supports. These bounds will be useful in proving Theorem \ref{theorem:stability-of-split}.

\medskip 

Let $n,m\in \nn_+$ such that $m<n$. For ease of notation, we define the following family of sets: 
	\begin{equation}\label{eq:ann-cr}
		\begin{split}
\an(m,n) & = [-n, -m] \cup [m, n] 
		\end{split}
	\end{equation}
\begin{lemma}\label{lem:LR2}
Suppose $\norm{\Phi}_{h,\beta} <\infty$ for $\beta > 2$. There exists a constant $\kappa(\beta )>0$ such that for all choices $n,m,c,p\in \nn$ with $c<m<n$, if $[ -n-p, n+p ] \subset \La$, the following inequality holds for all $A \in \an(m,n)$ and $B \in \La \setminus \an( m-c, n+p)$:
	\begin{equation}\label{eq:LR-conc}
		\begin{split}
\norm{ [\tau^{\La}_t( A), B] } \leq \kappa(\beta) \norm{A} \norm{B} ( e^{\nu |t|} - 1)  F_{\beta-2} ( \min \set{ p,c} )
		\end{split}
	\end{equation} 
where $\kappa(\beta)$ and $\nu$ can be taken as:
	\begin{equation}
		\begin{split}
\kappa(\beta) & = \frac{ 16  }{C_\beta}(\beta/2-1)^{-2}
\\ 
\\
\nu & = 2 \norm{\Phi}_{h,\beta}  C_\beta. 
		\end{split}
	\end{equation} 	
In particular, $\kappa(\beta)$ does not depend on $n,m,c, p$ or the function $h(x)$. 
\end{lemma}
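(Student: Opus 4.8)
The plan is to derive this ``support-independent'' Lieb--Robinson bound from the standard Lieb--Robinson estimate (as in Section 3 of \cite{NSY}) by carefully exploiting the geometry of the annular region $\an(m,n)$ and its complement. The standard bound gives, for $A\in\A_X$ and $B\in\A_Y$ with $X\cap Y=\emptyset$,
	\begin{equation}
		\begin{split}
\norm{[\tau_t^\La(A),B]}\leq \frac{2\norm{A}\norm{B}}{C_\beta}(e^{\nu|t|}-1)\sum_{x\in X}\sum_{y\in Y}F_\beta(|x-y|),
		\end{split}
	\end{equation}
with $\nu=2\norm{\Phi}_{h,\beta}C_\beta$ exactly as in the statement. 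The point is that the double sum $\sum_{x\in X}\sum_{y\in Y}F_\beta(|x-y|)$ must be controlled \emph{uniformly in the size} of $X=\an(m,n)$ and $Y\subset\La\setminus\an(m-c,n+p)$ — so one cannot simply bound $\norm{A}$ or $\norm{B}$ by factors depending on $|X|,|Y|$. Instead, one uses that every $x\in\an(m,n)$ and every $y$ in the complement of $\an(m-c,n+p)$ are separated either by at least $c$ (if $y$ is ``inside,'' i.e.\ $|y|<m-c$) or by at least $p$ (if $y$ is ``outside,'' i.e.\ $|y|>n+p$), together with the one-dimensional lattice structure, which makes the number of lattice points at a given distance bounded.

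Concretely, the key step is the estimate
	\begin{equation}
		\begin{split}
\sup_{x\in\an(m,n)}\ \sum_{\substack{y\in\zz\\ |x-y|\geq r}}F_\beta(|x-y|)\ \leq\ C\,(\beta/2-1)^{-1}F_{\beta-2}(r)\,\Big(\sum_{k\geq 0}F_2(k)\Big)
		\end{split}
	\end{equation}
for a universal constant, which follows by splitting $F_\beta(s)=F_2(s)\cdot e^{-h(s)}(1+s)^{-(\beta-2)}\leq F_2(s)(1+s)^{-(\beta-2)}$ and using monotonicity of $s\mapsto(1+s)^{-(\beta-2)}$ for $s\geq r$ plus a standard tail bound $\sum_{s\geq r}(1+s)^{-(\beta-2)}\lesssim (\beta/2-1)^{-1}(1+r)^{-(\beta/2-1)}$ obtained by comparison with an integral (this is where the hypothesis $\beta>2$ and the two powers of $(\beta/2-1)^{-1}$ originate; one power comes from summing over $y$ at fixed $x$, the other from the geometric decay allowing the sum over $x\in\an(m,n)$ to telescope). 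Since $\min\{p,c\}$ is a lower bound for $|x-y|$ for \emph{every} admissible pair $(x,y)$, and since the subadditivity/monotonicity of $h$ guarantees $e^{-h(|x-y|)}\leq e^{-h(\min\{p,c\})}$ is absorbed cleanly into $F_{\beta-2}$, one then sums over $y\in Y$ and over $x\in\an(m,n)$, each sum producing a convergent factor bounded by a constant times $(\beta/2-1)^{-1}$, to reach the claimed $F_{\beta-2}(\min\{p,c\})$ with the constant $\kappa(\beta)=16C_\beta^{-1}(\beta/2-1)^{-2}$.

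I would organize the proof as: (1) invoke the standard Lieb--Robinson bound with the stated $\nu$; (2) observe that for $A$ supported in $\an(m,n)$ and $B$ in $\La\setminus\an(m-c,n+p)$ the supports are disjoint and every pair of points is at distance $\geq\min\{p,c\}$; (3) prove the one-dimensional tail lemma bounding $\sum_{s\geq r}F_\beta(s)$ by a constant times $(\beta/2-1)^{-1}F_{\beta-2}(r)$, using the factorization through $F_2$ and integral comparison; (4) sum this first over $y$, then over $x$ ranging through the (possibly infinite-in-$\La$) annulus, with the extra decay $F_{\beta-2}$ making the $x$-sum converge to another $(\beta/2-1)^{-1}$ factor; (5) collect constants. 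The main obstacle is step (4): one must be careful that the bound is genuinely independent of $n,m,c,p$ and of $h$, which forces us to extract \emph{two} powers of $(\beta/2-1)^{-1}$ and to verify that the worst case is the pair of points realizing distance exactly $\min\{p,c\}$, so that no dependence on the (arbitrarily large) width of the annulus or of $\La$ can creep in; the subadditivity of $h$ is exactly what lets the $e^{-h}$ factors be handled without generating $h$-dependent constants.
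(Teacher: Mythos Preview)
Your overall strategy—invoke the standard Lieb--Robinson estimate and then control the double sum $\sum_{x\in\an(m,n)}\sum_{y}F_\beta(|x-y|)$ uniformly in all parameters—is exactly what the paper does. The gap is in your step~(4). Your ``key step'' bounds the $y$-sum by something \emph{uniform} in $x$ (you take a supremum over $x$), namely a constant times $F_{\beta-2}(\min\{p,c\})$; but then summing over the $2(n-m+1)$ points $x\in\an(m,n)$ multiplies by the width of the annulus, which is precisely what must be avoided. The phrase ``the extra decay $F_{\beta-2}$ making the $x$-sum converge'' cannot work as written, because that factor does not depend on $x$. (Your tail estimate $\sum_{s\geq r}(1+s)^{-(\beta-2)}\lesssim (\beta/2-1)^{-1}(1+r)^{-(\beta/2-1)}$ is also incorrect: the integral comparison gives $(\beta-3)^{-1}(1+r)^{-(\beta-3)}$, and only for $\beta>3$.)

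The fix, which is what the paper actually does, is to split the exponent \emph{symmetrically}: write $(1+|x-y|)^{-\beta}=(1+|x-y|)^{-\beta/2}(1+|x-y|)^{-\beta/2}$. For fixed $x$ one factor is bounded by $(1+d(x,Y))^{-\beta/2}$ and pulled outside the $y$-sum; the remaining factor is summed over $y$ as a tail series $\leq 2\sum_{r\geq\min\{p,c\}}(1+r)^{-\beta/2}$. The pulled-out factor is genuinely $x$-dependent: as $x$ moves into the interior of the annulus the distance $d(x,Y)$ grows linearly, so $\sum_{x\in\an(m,n)}(1+d(x,Y))^{-\beta/2}$ is itself a tail sum bounded by $2\sum_{r\geq\min\{p,c\}}(1+r)^{-\beta/2}$, convergent precisely when $\beta>2$. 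Squaring the single tail bound $\sum_{r\geq q}(1+r)^{-\beta/2}\leq(\beta/2-1)^{-1}(1+q)^{-(\beta/2-1)}$ is what produces both factors of $(\beta/2-1)^{-1}$ and the overall $F_{\beta-2}(\min\{p,c\})$. Your asymmetric split $\beta=2+(\beta-2)$, even if you feed in $r=d(x,Y)$, would require $\beta>3$ for the subsequent $x$-sum to converge and would deliver $F_{\beta-3}$ with a $(\beta-3)^{-1}$ constant, not the stated bound.
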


\begin{proof}
Denote $\crr(a,b) = \Lambda \setminus \an(a,b)$. By iterative arguments (cf. \cite{NS2}), it can be shown that the commutator in (\ref{eq:LR-conc}) is bounded above by the series:
	\begin{equation}
		\begin{split}
\sup _{\substack{A \in \A_{\an(m,n)} \setminus \set{0} \\ \norm{A}=1}} \norm{ [ \tau_t^\La(A), B]} \leq 2 \norm{B} \sum _{k=1}^\infty \frac{ (2|t|)^k}{k!} a_k 
		\end{split}
	\end{equation}
where the right-hand side is convergent for $a_k$ defined:
	\begin{gather}
a_k = \sum _{Z_1 \in S_\La\big{(} \an(m,n) \big{)}} \sum _{Z_2 \in S_\La( Z_1)}  \cdots \sum _{Z_k \in S_\La(Z_{k-1})} \delta_Y (Z_k) \norm{\Phi(Z_1)}_{[0,1]} \cdots \norm{\Phi(Z_k)}_{[0,1]} \\
\del_Y(W) = \bigg{ \{ }\begin{array}{l l} 1 & \text{ if }W \cap Y \not = \emptyset \\ 0 & \text{ else} \end{array}.
	\end{gather}
Here, $S_\La(W) = \set{ Z \subset \La : Z \cap W\not = \emptyset, Z \cap W^c \not = \emptyset}$ denotes the boundary sets of $W$. Let $C_\beta$ denote the convolution constant of $F_\beta$. For any $k$:
	\begin{equation}\label{eq:LR-work}
		\begin{split}
a_k & \leq \sum _{\substack{x\in \an(m,n)\\ y\in \crr(m-c,n+p)}} ~ \sum _{z_1, \ldots, z_{k-1} \in \La} ~\sum _{\substack{ Z_ 1 \in S_\La (\an(n,m) ) \\ x, z_1 \in Z_1}} \sum _{\substack{ Z_ 2 \in S_\La(Z_1) \\ z_1, z_2 \in Z_2}} \cdots \sum _{\substack{ Z_ {k} \in S_\La(Z_{k-1}) \\ z_{k-1}, y \in Z_k}} \norm{\Phi(Z_1)}_{[0,1]} \cdots \norm{\Phi(Z_k)}_{[0,1]} \\
& \leq \sum _{\substack{x\in \an(m,n) \\y\in \crr(m-c,n+p)}} \sum _{z_1 \in \La} ~ \sum _{\substack{ Z_ 1 \in S_\La(\an(m,n)) \\ x, z_1 \in Z_1}} \norm{\Phi(Z_1)}_{[0,1]} ( C_\beta^{k-2} \norm{\Phi}_{h,\beta}^{k-1} ) F_\beta(|z_1 - y|) \\
& \leq ( \norm{\Phi}_{h,\beta} ^{k} C_\beta ^{k-1}) e^{ - h( \min\set{p,c})}\sum _{x \in \an(m,n)} ~ \sum _{y\in \crr (m-c,n+p)} \frac{1}{ (1+|x-y|)^\beta} .
		\end{split}
	\end{equation} 
Since:
	\begin{equation}
		\begin{split}
\sum _{m \leq x \leq n} ~ \sum _{y \in \crr(m-c,n+p)}  \frac{1}{ (1+|x-y|)^\beta}& \leq 2\sum _{m\leq x \leq n}  \frac{1}{ (1+ d( x, \crr(m-c,n+p) )^{\beta/2}}  \sum _{r \geq \min \set{p,c}} \frac{1}{(1+r)^{\beta/2}} \\
& \leq 4 \bigg{(} \sum _{r \geq \min \set{p,c}} \frac{1}{(1+r)^{\beta/2}} \bigg{)}^2 
		\end{split}
	\end{equation}
the symmetry in the sum of the last inequality of (\ref{eq:LR-work}) implies:
	\begin{equation}
		\begin{split}
a_k & \leq 8 \norm{\Phi}_{h,\beta}^k C_\beta^{k-1} e^{ - h( \min\set{p,c})}\bigg{(} \sum _{r \geq \min \set{p,c}} \frac{1}{(1+r)^{\beta/2}} \bigg{)}^2 \\
& \leq 8 \norm{\Phi}^k_{h,\beta} C_\beta^{k-1} ( \beta/2 - 1)^{-2} e^{ - h( \min\set{p,c})} \frac{1}{ ( 1+\min \set{p,c})^{ \beta - 2}} \\
\\
& \leq 8 \norm{\Phi}^k_{h,\beta} C_\beta^{k-1} ( \beta/2 - 1)^{-2} F_{\beta-2}( \min\set{p,c}).
		\end{split}
	\end{equation} 
Hence the inequality (\ref{eq:LR-conc}) holds with the choices: 
	\begin{equation}
		\begin{split}
\kappa(\beta ) & = \frac{ 16 }{C_\beta }(\beta/2-1)^{-2}  \\
\nu & = 2 \norm{\Phi} _{h, \beta} C_\beta.
		\end{split}
	\end{equation}
\end{proof}

We also record for completeness the following useful bound. 

\begin{corollary}\label{lem:LR}
Suppose $\norm{\Phi}_{h,\beta} < \infty$ for $\beta>2$. If $X, Y \subset \La$ with $\max X < \min Y$, then for all $A \in \A_X$, $B \in \A_Y$, 
	\begin{equation}\label{eq:bound}
		\begin{split}
\norm{ [ \tau_t^\La (A), B] } \leq \kappa(\beta) \norm{A} \norm{B} ( e^{\nu |t|} - 1) F_{\beta-2}\big{(} d(X,Y) \big{)}.
		\end{split}
	\end{equation}
\end{corollary}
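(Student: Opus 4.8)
The plan is to derive Corollary~\ref{lem:LR} as a degenerate case of Lemma~\ref{lem:LR2} by collapsing the annular geometry $\an(m,n)$ onto a single half-line block. First I would observe that the commutator estimate in Lemma~\ref{lem:LR2} is proven by the standard iterated-series bound (the Lieb--Robinson expansion), and that the only features of $\an(m,n)$ actually used in that proof are: (i) it is a finite union of intervals contained in $\La$, and (ii) the ``forbidden region'' $\an(m-c,n+p)$ is obtained by enlarging each boundary component by a fixed amount, so that the distance from $\an(m,n)$ to the complement of the enlarged set is $\min\{p,c\}$. For a pair $X,Y\subset\La$ with $\max X<\min Y$, the analogous picture is a single interval $[\min X,\max X]$ with the complement of $[\min X,\max Y]$ (or of any interval containing both $X$ and $Y$) playing the role of $\crr$; the relevant separation distance is then exactly $d(X,Y)$.

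The key steps, in order, would be: (1) Enlarge $X$ to the interval $X'=[\min X,\max X]\cap\zz$ and similarly take $\La\setminus Y'$ with $Y'=[\min X,\max Y]\cap\zz$, noting $A\in\A_X\subset\A_{X'}$ and $B\in\A_Y$ with $Y\subset \La\setminus X'$ separated from $X'$ by distance $d(X,Y)$. (2) Re-run the iterated-commutator series bound exactly as in the proof of Lemma~\ref{lem:LR2}, but with $\an(m,n)$ replaced by the single interval $X'$: the quantity $a_k$ is now a sum over boundary-set chains $Z_1,\dots,Z_k$ starting at a point $x\in X'$ and ending at a point $y$ outside $Y'$, and the same convolution-constant telescoping gives $a_k\le \norm{\Phi}_{h,\beta}^k C_\beta^{k-1} e^{-h(d(X,Y))}\sum_{x\in X'}\sum_{y}(1+|x-y|)^{-\beta}$. (3) Bound the remaining double sum: writing the $y$-sum as $\sum_{r\ge d(X,Y)}(1+r)^{-\beta}\le (\beta-1)^{-1}(1+d(X,Y))^{-(\beta-1)}$ for each fixed $x$, and using that for each $r$ the number of lattice points $x\in X'$ with $|x-y|\approx r$ contributing is controlled — in fact one gets a clean bound by splitting $(1+|x-y|)^{-\beta}=(1+|x-y|)^{-\beta/2}(1+|x-y|)^{-\beta/2}$ and summing one factor over $x$ and one over $y$ exactly as in the displayed estimate for $\an(m,n)$, yielding a factor $F_{\beta-2}(d(X,Y))$ with the same constants $(\beta/2-1)^{-2}$. (4) Sum the series $2\norm{B}\sum_k (2|t|)^k a_k/k!$ to get the $(e^{\nu|t|}-1)$ prefactor, recovering $\kappa(\beta)$ and $\nu$ identically.

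Alternatively, and perhaps more economically, I would try to deduce the corollary directly from Lemma~\ref{lem:LR2} without redoing the series: given $X,Y$ with $\max X<\min Y$, after a translation one can arrange $0\in(\max X,\min Y)$, pick $m$ small and $n$ large enough that $X\subset[-n,-m]$ (the left half of an annulus) — but this forces $Y$ to avoid $[m,n]$, which need not hold. So a single application of Lemma~\ref{lem:LR2} does not immediately suffice, and the honest route is to repeat the proof with the simpler one-sided geometry; this is strictly easier than the annular case because only one boundary component appears.

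The main obstacle — really a bookkeeping point rather than a genuine difficulty — is verifying that the double lattice sum $\sum_{x\in X'}\sum_{y\notin Y'}(1+|x-y|)^{-\beta}$ is dominated by $C(\beta/2-1)^{-2}(1+d(X,Y))^{-(\beta-2)}$ uniformly in the (arbitrarily large) sizes of $X$ and $Y$; this is exactly the support-independence that makes these bounds useful, and it follows from the same split-the-exponent trick used in Lemma~\ref{lem:LR2}: one power of $\beta/2$ is summed against the $x$-variable (bounded by $\sum_{r\ge0}(1+r)^{-\beta/2}\le(\beta/2-1)^{-1}$, independent of $|X|$) and the other against the $y$-variable (bounded by $\sum_{r\ge d(X,Y)}(1+r)^{-\beta/2}\le(\beta/2-1)^{-1}(1+d(X,Y))^{-(\beta/2-1)}$), with the cross terms handled by $|x-y|\ge \max\{d(x,Y^c\text{-side}),\,d(X,Y)\}$. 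Everything else is a verbatim specialization of the Lemma's proof, so I expect the corollary to follow in a few lines once this sum is pinned down.
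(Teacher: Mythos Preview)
Your proposal is correct and matches the paper's approach: the paper's proof is the single sentence ``the conclusion follows from observing that the origin has no distinguished role in the proof of Lemma~\ref{lem:LR2},'' which is exactly your point that the iterated-series argument uses only separation distances and so specializes immediately to the one-sided geometry $X,Y$ with $\max X<\min Y$. Your detailed write-up of steps (1)--(4) simply unpacks what the paper leaves implicit, and your correct observation that a direct black-box application of Lemma~\ref{lem:LR2} does not work (because $Y$ need not avoid the other half of the annulus) confirms that re-running the proof is the intended route.
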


\begin{proof}
The conclusion follows from observing that the origin has no distinguished role in the proof of Lemma \ref{lem:LR2}. 
\end{proof}

We remark that taking the $\La \to \zz$ limit in Lemma \ref{lem:LR2} and Corollary \ref{lem:LR} shows that the infinite volume dynamics $\tau$ also satisfies the corresponding support-independent Lieb-Robinson bound.

\subsection{Automorphic equivalence and the split property} 

\medskip

We say states $\omega$ and $\varphi$ of $\A_\zz$ are \textit{automorphically equivalent} if there exist quasi-local dynamics $\tau: [0,1]\to \Aut(\A_\zz)$ such that:
	\begin{equation}
		\begin{split}
\omega = \varphi \circ \tau_1 . 
		\end{split}
	\end{equation}
In this section we prove that the split property is stable under automorphic equivalence. To proceed, we remark that if $\omega$ is a split factor state,  $\omega \sim \omega_L \otimes \omega_R$, then for any $\beta\in \Aut( \mathfrak{A}_\zz)$, the following states are also factor: $\omega_L \otimes \omega_R$, $\omega_L \otimes \omega_R \circ \beta$ and $\omega \circ \beta$. Next, let $\Phi^{L}(\cdot, t) : P_f( ( -\infty, 0]) \to \bigcup _{Z \subset (-\infty, 0]} \A_Z$ denote the restriction of $\Phi(\cdot,t)$ to the left half-infinite chain. Define $\Phi^R(\cdot,t)$ the same way using the complementary right half-infinite chain. $\Phi^L(\cdot,t)$ generates quasi-local dynamics $\tau^L :[0,1]  \to \Aut( \A _L)$ (resp. $\tau^R$). Likewise, the interaction $\Phi^{\cup}(\cdot,t): P_f(\zz) \to \A_{loc}$ defined by:
	\begin{equation}
		\begin{split}
\Phi^\cup (X,t) = \bigg{ \{ } \begin{array}{l l} \Phi(X,t) & \text{ if } X\subset (-\infty, 0] \text{ or } X \subset [1, \infty) \\ 0 & \text{ else } \end{array}
		\end{split}
	\end{equation}
generates quasi-local dynamics $\tau^\cup:[0,1] \to \Aut(\A_\zz).$ Then in the notation:
	\begin{equation}
		\begin{split}
( \omega_L\circ \tau_t^L) \otimes (\omega_R \circ \tau^R_t) = (\omega_L \otimes \omega_R) \circ \tau_t^\cup .
		\end{split}
	\end{equation}
	
In the following theorem, we consider interactions which decay by at least a power law, setting $h$ in (\ref{eq:Ffunction}) to be the zero function. 

\begin{theorem}\label{theorem:stability-of-split}
Suppose $\tau:[0,1] \to \Aut(\A_\zz)$ are quasi-local dynamics with a generating interaction $\Phi(\cdot, t)$ such that $\norm{\Phi}_{0,\beta} < \infty$. If $\omega_0$ is a split factor state and $\beta>3$, then $\omega_t = \omega_0 \circ \tau_t$ is also a split factor state, for all $t\in [0,1]$. 
\end{theorem}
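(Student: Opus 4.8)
The plan is to reduce the claim to an estimate controlled by the support-independent Lieb-Robinson bound of Lemma~\ref{lem:LR2}, applied to the "decoupled" dynamics $\tau^\cup$ introduced just above the theorem. Since $\omega_0$ is a split factor state, write $\omega_0 \sim \omega_L \otimes \omega_R$ for states $\omega_L$ on $\A_L$ and $\omega_R$ on $\A_R$. I want to exhibit a split factor state to which $\omega_t = \omega_0 \circ \tau_t$ is quasi-equivalent; the natural candidate is $(\omega_L \otimes \omega_R)\circ \tau_t^\cup = (\omega_L\circ\tau_t^L)\otimes(\omega_R\circ\tau_t^R)$, which is manifestly a product state across the left/right cut and is a factor state (as noted in the remark before the theorem, being the composition of a factor state with a $*$-automorphism). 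So the theorem follows once I show
\al{ \omega_0 \circ \tau_t \ \sim\ (\omega_L \otimes \omega_R)\circ \tau_t^\cup . }
Because quasi-equivalence is transitive and invariant under composition with a common $*$-automorphism, and because $\omega_0 \sim \omega_L\otimes\omega_R$ already gives $\omega_0\circ\tau_t^\cup \sim (\omega_L\otimes\omega_R)\circ\tau_t^\cup$, it suffices to prove $\omega_0\circ\tau_t \sim \omega_0\circ\tau_t^\cup$, i.e.\ to compare the true dynamics with the decoupled dynamics.

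The key estimate is therefore a bound on $\tau_t - \tau_t^\cup$ on local observables supported far from the cut $\{0,1\}$. This is a standard interpolation/Duhamel computation: writing $\frac{d}{ds}\,\tau^\cup_{t-s}\circ\tau_s(B) = \tau^\cup_{t-s}\big(i[H^{(\Phi-\Phi^\cup)}, \cdot]\,\tau_s(B)\big)$ in finite volume and integrating, one controls $\|\tau_t^\La(B) - \tau_t^{\cup,\La}(B)\|$ by an integral over $s$ of commutators of $\tau_s^\La(B)$ with the "missing" interaction terms $\Phi(Z,t)$ for $Z$ straddling the cut. If $B\in\A_Y$ with $Y\subset \an(m-c, n+p)$ configured as in Lemma~\ref{lem:LR2} — concretely, $Y$ lying in a window at distance $\ell$ on either side of the origin — then each such commutator is bounded via Lemma~\ref{lem:LR2} (or Corollary~\ref{lem:LR}) by $\kappa(\beta)\|B\|\|\Phi(Z,t)\|(e^{\nu|t|}-1)F_{\beta-2}$ evaluated at the relevant distance, and the remaining sum over straddling $Z$ is summable because $\|\Phi\|_{0,\beta}<\infty$ with $\beta>3$ makes $\sum_Z \|\Phi(Z)\|_{[0,1]}$ times a polynomial tail convergent; this is exactly where $\beta>3$ (rather than merely $\beta>2$) is consumed. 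The upshot is an estimate of the form $\|\tau_t^\La(B) - \tau_t^{\cup,\La}(B)\| \le \|B\|\, g(\ell)$ with $g(\ell)\to 0$ as the observable recedes from the cut, uniformly in $\La$, hence in the $\La\to\zz$ limit.

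Granting this, I finish by verifying the asymptotic quasi-equivalence criterion recalled at the end of Section~1: given $\ep>0$, choose $X_\ep = [-\ell,\ell]\cap\zz$ with $\ell$ large enough that $g(\ell) + (\text{split error of }\omega_0\text{ at scale }\ell) < \ep$; then for $Y\cap X_\ep=\emptyset$ and $B\in\A_Y$,
\al{ |\omega_t(B) - (\omega_L\otimes\omega_R)\circ\tau_t^\cup(B)| \le |\omega_0(\tau_t(B)-\tau_t^\cup(B))| + |\omega_0(\tau_t^\cup(B)) - (\omega_L\otimes\omega_R)(\tau_t^\cup(B))| , }
and the first term is $\le\|B\|g(\ell)$ while the second is small because $\tau_t^\cup(B)$ is (approximately, after a further Lieb-Robinson truncation using that $\tau^\cup$ does not transport across the cut) supported away from a fixed finite set and $\omega_0\sim\omega_L\otimes\omega_R$. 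The main obstacle I anticipate is bookkeeping the geometry: making the window/annulus configuration in the Duhamel estimate match the hypotheses of Lemma~\ref{lem:LR2} (the roles of $m,n,c,p$) and checking that the error $g(\ell)$ is genuinely uniform in the volume $\La$ and decays as $\ell\to\infty$ — this is where the power-law ($h\equiv 0$) decay and $\beta>3$ must be handled carefully, since with only subexponential decay one cannot appeal to the stronger finite-range or exponential techniques referenced in the introduction.
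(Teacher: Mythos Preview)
Your proposal is correct and follows essentially the same approach as the paper: compare $\omega_t$ with $(\omega_L\circ\tau_t^L)\otimes(\omega_R\circ\tau_t^R)$ via the asymptotic quasi-equivalence criterion, control $\norm{\tau_t(B)-\tau_t^\cup(B)}$ for observables $B$ supported far from the cut by a Duhamel/Gronwall bound against the straddling interaction terms, and handle the remaining piece by a Lieb-Robinson localization (the paper uses the conditional expectation $\m{E}_{n,r}$ onto $\an(n,2(n+r))$) combined with the split hypothesis on $\omega_0$. The only substantive detail your outline leaves implicit is that in the Duhamel step the paper splits the straddling sets $Z$ into those contained in $[-n,n]$ (treated collectively by Lemma~\ref{lem:LR2}, producing the factor $nF_{\beta-2}(n)$ that consumes $\beta>3$) and those of diameter exceeding $n$ (treated by interaction decay alone), since for large $Z$ the distance $d(Z,\supp{B})$ may vanish and the term-by-term Lieb-Robinson bound you describe gives nothing there.
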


\begin{proof}

Denote by $\omega_{L,t} =\omega_L \circ \tau_t^L$ (resp. $\omega_{R,t}$) and $\omega _0  = \omega$. Suppose $\ep>0$ and $n, r \in \nn$ such that $r>n$. Recalling the sets $\an(a,b)$ in (\ref{eq:ann-cr}), let $\m{E}_{n,r}: \A_\zz \to \A _{ \an(n, 2(n+r))}$ denote the conditional expectation with respect to the product trace state. Since $\omega$ is split and factor, there exists $N_\omega(\ep) \in \nn$ such that $ n > N_\omega(\ep)$ implies: 
	\begin{equation}
		\begin{split}
| \omega \circ \m{E} _{n,r} ( \tau_t(A)) - \omega_L \otimes \omega_R \circ \m{E}_{n,r}( \tau_t ( A)) | \leq \ep \norm{ \mathbb{E}_{n,r}( \tau_t(A))} \leq \ep \norm{A}. 
		\end{split}
	\end{equation} 
The following bounds will be derived independently of $r$, and so we will be able to let $r$ tend to infinity. Evidently for $A \in \A_{loc}$:
	\begin{equation}
		\begin{split}
| \omega_t (A) - \omega_{L,t}\otimes \omega_{R,t} (A) | & \leq  | (\omega - \omega_L \otimes \omega_R) \circ \tau_t (A) | + |\omega_L \otimes \omega_R \big{(} \tau_t (A) - \tau_t^\cup (A) \big{)} |  \\
& \leq \bigg{(} | \omega \circ \m{E} _{n,r} ( \tau_t(A)) - \omega_L \otimes \omega_R \circ \m{E}_{n,r}( \tau_t ( A)) | + 2 \norm{ \tau_t (A) - \m{E}_{n,r}(\tau_t(A)) }\bigg{)} \\
& \hspace{20mm} + \norm{ \tau_t(A) - \tau^\cup _t (A)}.
		\end{split}
	\end{equation}
Lemma \ref{lem:LR2} implies that if $\supp{A} \subset \an( 2n, 2n+r)$: 
	\begin{equation}
		\begin{split}
\norm{ \tau_t (A) - \m{E} _{n,r} ( \tau_t(A)) } \leq 2 \kappa(\beta) \norm{A} (e^{\nu |t| }-1) F_{\beta-2}(n). 
		\end{split}
	\end{equation}
To conclude the proof, it is left to show that for fixed $t\in \rr$, the quantity $\norm{ \tau_t(A) - \tau^\cup _t (A)}$ decays as a function of $n$, uniformly in the norm of $A$. This will follow from a Gronwall-type inequality. Let $\La$ be any interval containing $[ -2(n+r), 2(n+r)]$. Define:
	\begin{equation}
		\begin{split}
f_\La (t) = \tau_t^\La (A) - \tau_t^{\cup, \La} (A) 
		\end{split}
	\end{equation}
where $\tau^\La$ and $\tau^{\cup, \La}$ are the corresponding finite-volume dynamics. Since $f_\La(t)$ satisfies the ODE and initial value problem:
	\begin{equation}
		\begin{split}
\frac{d}{dt} f_\La(t) & = i [ H _\La( \Phi^\cup, t) , f_\La(t)] + i [ H_\La (\Phi,t) - H_\La ( \Phi^\cup, t) , \tau_t^\La (A) ]  \\
f_\La(0) & = 0
		\end{split}
	\end{equation}
the following bound is valid:
	\begin{equation}\label{eq:up-bound}
		\begin{split}
\norm{f_\La(t)} & \leq \norm{f _\La(0)} + \int _ 0 ^ {|t|} ds ~ \norm{ [ H_\La (\Phi,s) - H_\La ( \Phi^\cup, s) , \tau_s^\La(A) ]  } \\
& = \int _ 0^{|t|}ds~ \bigg{ \lVert}  \sum _{\substack{ Z \subset \La: \\Z\cap (-\infty, 0]\not = \emptyset \\ Z\cap [1,\infty) \not =\emptyset } } [ \Phi(Z,s), \tau_s ^\La (A)]  \bigg{ \rVert}.
		\end{split}
	\end{equation}
We can further divide the admissible $Z$ in the sum of the last line of  (\ref{eq:up-bound}) into:
	\begin{equation}
		\begin{split}
C_I & = \set{ Z \subset \Lambda : Z \cap (-\infty,0] \not = \emptyset, Z \cap [1,\infty) \not = \emptyset, Z \subseteq [-n,n]} \\ 
C_{II} & = \set{ Z\subset \Lambda : Z \cap (-\infty,0] \not = \emptyset, Z \cap [1,\infty) \not = \emptyset, Z \not \subseteq [-n,n]  } .
		\end{split}
	\end{equation}
The contribution of the $C_{II}$ terms to the upper bound in (\ref{eq:up-bound}) are majorized using decay of the interaction. Let $\delta>0$ such that $\beta > 2 + \delta$. Then:
	\begin{equation}\label{eq:1}
		\begin{split}
\norm{ \sum _{Z \in C_{II} } [ \Phi(Z,s), \tau_s^\La(A)] } & \leq \sum _{\substack{ x\in (-\infty, -n] \\ y\in [1, \infty)}}  2\norm{A} \bigg{(} \sum \set{ \norm{\Phi(Z,s)}: x,y\in Z} \bigg{)} \\
& \hspace{20mm} + \sum _{\substack{x\in [n,\infty) \\ y\in (-\infty, 0] }}  2\norm{A} \bigg{(} \sum \set{ \norm{\Phi(Z,s)}: x,y\in Z} \bigg{)} \\
& \leq 4 \norm{A}\norm{\Phi}_{0,\beta} \sum _{x=n}^\infty \sum _{y=0}^\infty F_\beta(x+y) \\
& \leq 4 \norm{A}\norm{\Phi}_{0,\beta} \sum _{x=n}^\infty F_{\beta-1-\delta/2}( x+1) \sum _{y=1}^\infty \frac{1}{(1+x+y)^{1+\delta/2}} \\
& \leq \bigg{[ } \frac{8}{(\beta-2-\delta/2)\delta} \norm{A} \norm{\Phi}_{0,\beta}   \bigg{]}  F_{\beta - 2-\delta/2} (n).
		\end{split}
	\end{equation}
And an application of Lemma \ref{lem:LR2} majorizes the contribution from $C_{I}$. Note we have the simple bound $\norm{\sum _{Z \in C_I} \Phi(Z,s) } \leq 3 \norm{\Phi}_{0, \beta} n $. And so: 
	\begin{equation}\label{eq:2}
		\begin{split}
\norm{ \sum _{Z \in C_I} \big{[} \Phi(Z,s), \tau_s^\La(A)\big{]} } &   \leq [ 3\kappa(\beta) \norm{\Phi}_{0,\beta}  \norm{A} ] (e ^{ \nu |s|} - 1) n F_{\beta-2}(n)   .
		\end{split}
	\end{equation}
The conclusion follows from the fact that the upper bounds in (\ref{eq:1}) and (\ref{eq:2}) are independent of the sufficiently large, finite interval $\La$ and $r$. 
\end{proof}

Lastly, we remark on when the left and right states in Theorem \ref{theorem:stability-of-split} can be taken to be restrictions. 

\begin{corollary}\label{cor:restrictions}
Suppose $\omega_0$ is a factor state such that $\omega_0 \sim \omega_0|_{\A_L} \otimes \omega_0|_{\A_R}$, and $\tau$ satisfies the hypotheses of Theorem \ref{theorem:stability-of-split}. Then $\omega _t = \omega_0 \circ \tau_t \sim \omega_t |_{\A_L} \otimes \omega_t |_{\A_R}$ for all $t\in [0,1]$. 
\end{corollary}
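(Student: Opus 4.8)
The plan is to reduce Corollary \ref{cor:restrictions} to Theorem \ref{theorem:stability-of-split} by identifying the left and right states $\omega_L, \omega_R$ of the theorem with the restrictions $\omega_0|_{\A_L}$ and $\omega_0|_{\A_R}$, and then checking that the construction in the proof of the theorem actually produces states whose restrictions to $\A_L$ and $\A_R$ are the claimed ones. First I would apply Theorem \ref{theorem:stability-of-split} directly with $\omega_L = \omega_0|_{\A_L}$ and $\omega_R = \omega_0|_{\A_R}$; by hypothesis $\omega_0 \sim \omega_L \otimes \omega_R$, so the theorem yields that $\omega_t = \omega_0 \circ \tau_t$ is split and factor, with $\omega_t \sim \omega_{L,t} \otimes \omega_{R,t}$, where $\omega_{L,t} = \omega_L \circ \tau_t^L = \omega_0|_{\A_L} \circ \tau_t^L$ and similarly for $\omega_{R,t}$. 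So it remains only to identify $\omega_{L,t}$ with $\omega_t|_{\A_L}$ and $\omega_{R,t}$ with $\omega_t|_{\A_R}$.

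The key step is the observation that $\tau_t^L$ is the dynamics generated by the restricted interaction $\Phi^L(\cdot, t)$ on $\A_L$, and that $\tau_t^\cup$ restricted to $\A_L$ coincides with $\tau_t^L$ (and likewise on the right), by the factorization identity $(\omega_L \circ \tau_t^L) \otimes (\omega_R \circ \tau_t^R) = (\omega_L \otimes \omega_R) \circ \tau_t^\cup$ recorded just before the theorem. Concretely, for $B \in \A_L$ one has $\tau_t^\cup(B) \in \A_L$ and $\tau_t^\cup(B) = \tau_t^L(B)$, since $\Phi^\cup$ restricted to subsets of $(-\infty,0]$ equals $\Phi^L$ and no interaction term of $\Phi^\cup$ bridges the cut. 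Hence for $B \in \A_L$,
\[
\omega_t|_{\A_L}(B) = \omega_0(\tau_t(B)), \qquad \omega_{L,t}(B) = \omega_0|_{\A_L}(\tau_t^L(B)) = \omega_0(\tau_t^\cup(B)),
\]
and the two differ by $\omega_0\big(\tau_t(B) - \tau_t^\cup(B)\big)$, which is exactly the quantity shown in the proof of Theorem \ref{theorem:stability-of-split} (via the Gronwall estimate using (\ref{eq:1}) and (\ref{eq:2})) to vanish in the appropriate asymptotic sense. More precisely, for $B$ supported far to the left the bound $\norm{\tau_t(B) - \tau_t^\cup(B)}$ is $O(\norm{B} \, n F_{\beta-2}(n))$ where $n$ measures the distance of $\supp B$ from the cut, so $\omega_{L,t}$ and $\omega_t|_{\A_L}$ agree on $\A_{(-\infty, -n]}$ up to an error of that order; letting $n \to \infty$ shows $\omega_{L,t} = \omega_t|_{\A_L}$ on $\A_L$, and symmetrically $\omega_{R,t} = \omega_t|_{\A_R}$.

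Combining, $\omega_t \sim \omega_{L,t} \otimes \omega_{R,t} = \omega_t|_{\A_L} \otimes \omega_t|_{\A_R}$, which is the assertion. The only mild subtlety — and the step I would be most careful about — is the passage from "agreement on each $\A_{(-\infty,-n]}$ up to $o(1)$" to genuine equality of the states $\omega_{L,t}$ and $\omega_t|_{\A_L}$ on all of $\A_L$: since $\bigcup_n \A_{(-\infty,-n]}$ is not norm-dense in $\A_L$, one argues instead that for any fixed $B \in \A_{loc} \cap \A_L$ the two functionals agree (take $n$ large enough that $\supp B \subset (-\infty, -n']$ for the relevant scale and use the uniform-in-$n$ decay), and then extend to $\A_L$ by norm-density of $\A_{loc} \cap \A_L$ and continuity of both states. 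This is routine given the estimates already established in the proof of the theorem, so no new ideas beyond bookkeeping are required.
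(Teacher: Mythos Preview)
Your overall route matches the paper's: apply Theorem \ref{theorem:stability-of-split} with $\omega_L = \omega_0|_{\A_L}$, $\omega_R = \omega_0|_{\A_R}$ to obtain $\omega_t \sim \omega_{L,t}\otimes\omega_{R,t}$, and then compare $\omega_{L,t}$ with $\omega_t|_{\A_L}$ using the Gronwall estimate on $\tau_t - \tau_t^\cup$. That is exactly what the paper does.

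The gap is in your identification step. You assert that $\omega_{L,t} = \omega_t|_{\A_L}$ as states on $\A_L$, but this is false in general. For $B\in\A_L$ supported near the cut (e.g.\ $\supp B = \{0\}$), the difference $\tau_t(B)-\tau_t^\cup(B)$ is $O(1)$, so $\omega_0(\tau_t(B))$ need not equal $\omega_0(\tau_t^\cup(B))$. Your last-paragraph patch (``for any fixed $B\in\A_{loc}\cap\A_L$ \ldots take $n$ large enough that $\supp B\subset(-\infty,-n']$'') does not work: the support of a fixed $B$ is fixed, and if it meets the cut you cannot push it left. The estimates (\ref{eq:1})--(\ref{eq:2}) give decay only as a function of the distance from $\supp B$ to the origin, not for all $B$.

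What the estimates do yield is: for every $\ep>0$ there is $n$ with $|\omega_{L,t}(B)-\omega_t|_{\A_L}(B)|\le \ep\norm{B}$ for all local $B$ with $\supp B\subset(-\infty,-n]$. By Corollary 2.6.11 of \cite{BR1} this is exactly the asymptotic criterion for \emph{quasi-equivalence} of factor states, not equality. To invoke it you must first check that $\omega_t|_{\A_L}$ is a factor state; the paper does this via the asymptotic characterization of factoriality (Theorem 2.6.10 of \cite{BR1}), using that $\omega_0$ is factor and $\tau_t$ is quasi-local, a point your argument omits. Once $\omega_{L,t}\sim\omega_t|_{\A_L}$ and $\omega_{R,t}\sim\omega_t|_{\A_R}$ are established, the conclusion $\omega_t\sim\omega_t|_{\A_L}\otimes\omega_t|_{\A_R}$ follows by transitivity since quasi-equivalence of factor states passes to tensor products. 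So your argument is repairable, but only by downgrading ``equality'' to ``quasi-equivalence'' throughout and adding the factoriality check --- which is precisely the paper's proof.
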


\begin{proof}
It suffices to show that $\omega_t|_{\A_L} \sim \omega_0 |_{\A_L} \circ \tau_t^L$ (resp. for the right algebra). This will follow by methods used in the proof of Theorem \ref{theorem:stability-of-split}, and so we will be brief. By a familiar asymptotic condition of being a factor state (cf. Theorem 2.6.10 of \cite{BR1}), the assumptions that $\omega_0$ is factor and $\tau$ is a quasi-local map imply $\omega_t  |_{\A_L}$ is also a factor state. Then Gronwall-type inequalities on $f_\La(t) = (\tau_t^\La - \tau_t^{\cup,\La})(A)$, $A \in \A_L \cap \A_{loc}$, show that $\omega_t |_{\A_L}$ and $\omega_0 |_{\A_L} \circ \tau_t^L$ are quasi-equivalent.
\end{proof}

\subsection{Comment on Haag duality and translation invariant states} 

Now, we consider the split property for translation invariant pure states. A result of Matsui \cite{M} shows that uniform decay of correlations in a translation invariant pure state $\varphi$ of $\A_\zz$ which satisfies Haag duality, implies $\varphi$ is split. 

\medskip

It is also proven in \cite{M} that if $\Phi$ is a translation invariant, finite-range interaction whose local Hamiltonians have a unique ground state and uniform spectral gap, $\varphi$ is a translation invariant, pure ground state of $\Phi$, and the GNS Hamiltonian $H_\varphi \geq 0$ has a nondegenerate eigenvalue at $0$, then $\varphi$ satisfies Haag duality. The conclusion is then $\varphi$ is necessarily split.

\medskip 

In the following, we remark a sufficient condition on the decay of an interaction to guarantee uniform decay of correlations, i.e. in terms of bounds which do not depend on the support size of the observables.  We leave open the question of sufficient conditions for Haag duality to hold for a general translation invariant state.  

\begin{corollary}[Uniform correlation decay]\label{cor:decay}
Suppose $\omega$ is a gapped ground state of $\Phi$, with $\norm{\Phi}_{h,\beta}<\infty$, and the GNS Hamiltonian $H_\omega \geq 0$ has a nondegenerate ground state, i.e. : 
	\begin{equation}
		\begin{split}
(i) ~\spec{H_\omega}\setminus \set{0} \subset [\gamma, \infty) ~~~\text{    and    }~~~ (ii)  ~ker(H_\omega) = \cc \Omega
		\end{split}
	\end{equation}
There exists a constant $\mu(F)>0$ such that for all $X,Y$ finite with $\max X < \min Y$, 
	\begin{equation}
		\begin{split}
| \omega(A B) -\omega( A)\omega( B) | \leq \mu(F) \norm{A}\norm{B} e^{ - uh(d(X,Y))}
		\end{split}
	\end{equation}
We may take:
	\begin{equation}
		\begin{split}
\mu(F) & = \bigg{(}  1 + \frac{ \kappa(\beta)}{\pi} + \sqrt{ \frac{ 2 \nu + \gamma}{ \pi  \gamma h(d(X,Y))}}  \bigg{)}\\
u & = \frac{ \gamma }{2 \nu + \gamma} 
		\end{split}
	\end{equation}
\end{corollary}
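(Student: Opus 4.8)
The plan is to derive the uniform correlation-decay bound from a spectral-gap estimate by combining the support-independent Lieb--Robinson bound of Corollary \ref{lem:LR} with a standard quasi-adiabatic/filter-function argument, following the method of Nachtergaele--Sims and Hastings--Koma but keeping track of the fact that the Lieb--Robinson velocity is support-independent. First I would fix finite $X,Y$ with $\max X < \min Y$, set $R = d(X,Y)$, and take $A \in \A_X$, $B \in \A_Y$. Passing to the GNS representation, we may assume $\omega(\cdot) = \braket{\Omega}{\cdot\,\Omega}$ with $H_\omega \geq 0$, $H_\omega \Omega = 0$, and $\spec{H_\omega} \setminus \{0\} \subset [\gamma,\infty)$. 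Write $A_0 = A - \omega(A)\1$ so that $P_\Omega A_0 \Omega = 0$ where $P_\Omega$ is the projection onto $\cc\Omega$; then $\omega(A B) - \omega(A)\omega(B) = \braket{\Omega}{A_0 B_0\,\Omega}$, and since $(\1 - P_\Omega) H_\omega^{-1}$ is bounded by $\gamma^{-1}$ on the range of $\1 - P_\Omega$, the idea is to insert the identity $A_0\Omega = (\1-P_\Omega)A_0\Omega$ and use the integral representation of the time evolution of $A$.

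The key steps, in order: (1) Introduce the Gaussian-type filter function $w_t$ whose Fourier transform $\hat w$ is supported away from a neighborhood of the origin, of width $\sim \gamma$, so that $\int w_t(s)\, e^{isH_\omega}\, ds$ annihilates $\Omega$ and, more importantly, so that $\int w_t(s)\, \tau_s(A_0)\, ds$ acting on $\Omega$ vanishes when paired against $\Omega$; this is the standard ``energy filter'' that localizes to the spectral subspace $[\gamma,\infty)$. (2) Write $\braket{\Omega}{A_0 B_0\,\Omega}$ as an error term coming from replacing $A_0$ by its filtered version plus a commutator term; the replacement error is controlled by $\int |w_t(s)|\,\|[\tau_s(A_0),B_0]\|\,ds$, to which Corollary \ref{lem:LR} applies, giving a bound $\lesssim \kappa(\beta)\|A\|\|B\| \int |w_t(s)| (e^{\nu|s|}-1) F_{\beta-2}(R)\, ds$ — here the support-independence of $\kappa$ and $\nu$ is exactly what makes the estimate uniform. (3) Optimize the filter width (equivalently the Gaussian variance, i.e. the parameter playing the role of $t$) against $R$: the filter contributes a factor like $e^{-cR^2/t} $ or, after balancing with the $e^{\nu|s|}$ growth from Lieb--Robinson and the gap $\gamma$, a net exponential $e^{-u\, h(R)}$ with $u = \gamma/(2\nu+\gamma)$, plus polynomially-decaying corrections absorbed into $\mu(F)$. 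The three terms in the displayed $\mu(F)$ — the $1$, the $\kappa(\beta)/\pi$, and the square-root term $\sqrt{(2\nu+\gamma)/(\pi\gamma h(R))}$ — correspond respectively to the truncation of the filter integral, the Lieb--Robinson commutator contribution, and the Gaussian tail/saddle-point error from the filter optimization.

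The main obstacle I expect is step (3): carefully choosing the filter function and the optimization so that the $h$-dependence comes out as $e^{-u h(R)}$ rather than, say, $e^{-u' h(R)^2/t}$, and so that the subadditivity and monotonicity of $h$ are enough (we cannot use convexity or any growth lower bound on $h$). Since $F_{\beta-2}(R) = e^{-h(R)}(1+R)^{-(\beta-2)}$ already carries one factor $e^{-h(R)}$ and the Lieb--Robinson bound runs with velocity $\nu$, the exponent $u=\gamma/(2\nu+\gamma)$ should emerge from balancing $e^{\nu s}$ against $e^{-\gamma s}$ after the filtered operator has been pushed to time scale $s \sim h(R)/(2\nu+\gamma)$; making this rigorous while only invoking subadditivity of $h$ (so that $h(s)$-type quantities can be re-expressed in terms of $h(R)$) is the delicate point. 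A secondary technical nuisance is ensuring all filter integrals converge and the finite-volume-to-infinite-volume limit is legitimate, but since Corollary \ref{lem:LR} already holds in the $\La\to\zz$ limit (as remarked after its proof) and the GNS data is for the infinite system, this should be routine. I would also double-check that the hypothesis $\beta>2$ (needed for Corollary \ref{lem:LR}) suffices here, or whether the polynomial prefactor forces a mild strengthening absorbed silently into $\mu(F)$.
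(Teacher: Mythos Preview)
Your proposal is correct and matches the paper's approach, which simply invokes the Nachtergaele--Sims argument \cite{NS2} with Corollary \ref{lem:LR} substituted for the usual Lieb--Robinson bound, then optimizes the Gaussian parameter $\alpha=\gamma/(2s)$ and time cutoff $s(\nu+\gamma/2)=h(d(X,Y))$ to produce the three displayed terms in $\mu(F)$. One small note: the Gaussian filter's Fourier transform is not literally supported away from the origin but merely small there, and your worry about subadditivity of $h$ is unnecessary in this step---the factor $e^{-h(R)}$ enters directly from $F_{\beta-2}(R)$ in the Lieb--Robinson estimate, with no re-expression of $h$ required.
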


\begin{proof}
The proof is essentially the same as the one given in \cite{NS2} changed only to use the Lieb-Robinson bound from Corollary \ref{lem:LR}, and so we will be brief. We suppress in notation the dependence on the representation. We may assume $\ip{ \Omega, B\Omega }=0$. For free parameters $\alpha,s $, taking $b$ sufficiently small, the method of proof in \cite{NS2} gives: 
	\begin{equation}
		\begin{split}
|\omega( A \tau_{ib}(B))| = | \ip{\Omega, A \tau_{ib} (B) \Omega}| \leq \norm{A} \norm{B} \bigg{(} e^{ - \frac{\gamma^2}{4 \alpha}} + \frac{\kappa(\beta)}{\pi} e^{ \nu s -  h(d(X,Y))} + \frac{1}{s \sqrt{\pi \alpha}} e^{- \alpha s^2} \bigg{)}.
		\end{split}
	\end{equation} 
Setting $\alpha = \gamma / 2s$ and $s$ such that:
	\begin{equation}
		\begin{split}
s ( \nu + \gamma/2) =  h( d(X,Y))
		\end{split}
	\end{equation}
and taking the limit $b\to 0$ yields the bound. 
\end{proof}

\begin{proposition}
Let $\Phi$ be a translation invariant interaction on a quantum spin chain such that $\norm{\Phi}_{h,\beta} <\infty$. Suppose $\omega$ is a pure, translation invariant, gapped ground state of $\Phi$, and that the normalized GNS Hamiltonian $H_\omega$ has a nondegenerate eigenvalue at $0$. 

\medskip 

If $\omega$ satisfies Haag duality, then $\omega$ is quasi-equivalent to $\omega |_{\A_L} \otimes \omega |_{\A_R}$.
\end{proposition}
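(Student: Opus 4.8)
The plan is to obtain the statement by combining Corollary~\ref{cor:decay} with the split criterion of Matsui recalled at the opening of this subsection: a translation invariant, pure state of $\A_\zz$ which satisfies Haag duality and has support-independent decay of correlations is split. Concretely, the argument proceeds in three steps: (i) check that $\omega$ and $\Phi$ fall under the hypotheses of Corollary~\ref{cor:decay}, which produces uniform decay of correlations for $\omega$; (ii) feed this into Matsui's theorem \cite{M} to conclude that $\omega$ is split in the sense of Definition~\ref{def:split}; and (iii) replace the abstract splitting states by the restrictions $\omega|_{\A_L}$ and $\omega|_{\A_R}$.

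For step (i): by hypothesis $\omega$ is a gapped ground state of $\Phi$ with $\norm{\Phi}_{h,\beta}<\infty$ (with $\beta$ large enough for Corollary~\ref{lem:LR}, i.e.\ $\beta>2$), and the normalized GNS Hamiltonian $H_\omega\geq 0$ has a nondegenerate eigenvalue at $0$; these are exactly conditions $(i)$--$(ii)$ of Corollary~\ref{cor:decay}. Hence $|\omega(AB)-\omega(A)\omega(B)|\leq \mu(F)\norm{A}\norm{B}\,e^{-uh(d(X,Y))}$ for all finite $X,Y$ with $\max X<\min Y$, with $u>0$ and $\mu(F)$ bounded as $d(X,Y)\to\infty$ since $h$ is non-decreasing; provided $h(x)\to\infty$ (as it must for the bound to be non-trivial) this is a genuine decay estimate, uniform in the supports of $A$ and $B$. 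For step (iii): a quasi-equivalence $\omega\sim\omega_L\otimes\omega_R$ restricts to $\omega|_{\A_L}\sim\omega_L$ and $\omega|_{\A_R}\sim\omega_R$, and forming tensor products of factor states on the two half-chains preserves quasi-equivalence, so $\omega\sim\omega|_{\A_L}\otimes\omega|_{\A_R}$; alternatively, since $\omega$ is pure this is precisely the remark following Definition~\ref{def:split} that the two formulations of the split property coincide in that case.

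The heart of the matter — and the step I expect to require the most care — is verifying that Matsui's argument in \cite{M} genuinely goes through with the decay furnished by Corollary~\ref{cor:decay}. His construction of the interpolating type~I factor (equivalently, of the quasi-equivalence with $\omega|_{\A_L}\otimes\omega|_{\A_R}$) was carried out for interactions whose correlations decay exponentially in the separation, whereas here we only have decay of order $e^{-uh(d)}$, which for subexponential $h$ is superpolynomial at best. One must therefore follow that construction and confirm that it uses only summability of the correlation function against geometric sequences of localized observables along the chain, and then identify the growth of $h$ (together with the size of $\beta$) that makes those sums converge — for instance $h(x)=c\log(1+x)$ with $c$ sufficiently large, or any $h(x)\asymp x^\alpha$ with $0<\alpha<1$. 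Haag duality itself is not an obstacle here, since it is assumed outright and enters only through the inclusion $\pi_\omega(\A_L)''\subseteq\pi_\omega(\A_R)'$ that places the interpolating factor between the two half-chain algebras. Once the summability threshold in Matsui's estimates is pinned down and matched against the hypotheses on $\Phi$, the remaining verifications are routine.
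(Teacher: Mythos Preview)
Your approach is correct and matches the paper's: the proof there is exactly the one-line citation of Corollary~3.2 of \cite{M} together with Corollary~\ref{cor:decay}. Your concern in the final paragraph is unnecessary --- Matsui's criterion (as summarized at the start of the subsection) requires only that the correlation bound be uniform in the support sizes and tend to zero with the separation, not any particular rate, so no rederivation of the argument in \cite{M} is needed; step~(iii) is likewise already contained in the cited result, since for pure $\omega$ the split property in Matsui's formulation is precisely $\omega\sim\omega|_{\A_L}\otimes\omega|_{\A_R}$.
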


\begin{proof}
This follows immediately from Corollary 3.2 of \cite{M} and the uniform decay of correlations guaranteed by Corollary \ref{cor:decay}. 

\end{proof}

\section{Application to SPT phases}

We recall the heuristic notion of a topological phase as an equivalence class of uniformly gapped interactions, where two such interactions $\Phi_0, \Phi_1$ are related if and only if there exists a sufficiently smooth interpolating family of interactions $\Phi(s)$, $0 \leq s \leq 1$, such that $\Phi(0) = \Phi_0$ and $\Phi(1) = \Phi_1$, and $\Phi(s)$ is gapped above the ground state, uniformly in $s$. It is known that in this case, the infinite-volume ground states of $\Phi$ and $\Psi$ obtained through weak$-*$ limits of finite-volume ground states are automorphically equivalent (cf. Theorem 5.5 of \cite{BMNS}). The equivalence relation for a symmetry protected topological phase has the additional requirement that the $\Phi(s)$ are fixed by the given symmetry. The hypothesis of a uniform gap is essential, and we formulate this condition as the following working definition: Say that $\Phi$ has a \textbf{uniform gap} if there exist $\gamma>0$ and minimum interval length $R_\gamma >0$ such that if $\Lambda$ is a finite interval, $\diam{\La} \geq R_\gamma$ implies:
	\begin{equation}
		\begin{split}
 \spec{H_\La(\Phi)} = \specc_{-} (H_\La (\Phi)) \cup \specc_{+} (H_\La(\Phi))
		\end{split}
	\end{equation}
with: 
	\begin{equation}\label{eq:gap}
		\begin{split}
\min \set{ \la - \mu : \la \in\specc_{+} (H_\La(\Phi)), \mu \in  \specc_{-} (H_\La (\Phi))  } \geq \gamma
		\end{split}
	\end{equation}
and $\diam{ \specc_{-}(H_\La(\Phi)) } \to 0$ as $\diam{\La} \to \infty$. Let $\Gamma(\zz)$ denote the uniformly gapped interactions on $\zz$.

\medskip 

In the following, we also work with a familiar formulation of equivalence in a gapped SPT phase \cite{BMNS}. While we note that more general symmetries may be handled in this framework, we restrict our discussion to the antilinear $*$-automorphism $\Xi$ of time reversal since it is one of three symmetries which protect the Haldane phase in odd-spin quantum spin chains \cite{GW, O, OT, PTBO, PTBO2, T}. We do not claim that these are necessary conditions for being in the same topological phase. 

\medskip

Our application is showing that the $\zz_2$-index is a well-defined invariant for a $\Xi$-protected topological phase which contains at least one interaction with a well-defined $\zz_2$ index (e.g. a finite-range interaction), provided the decay $F_\beta$ is sufficiently strong. 

\medskip

\noindent \textbf{Assumption on decay}: Suppose $\beta>0$. Let $F_\beta$ be determined by $h(x) = R x^b$ for any $R>0$ and $b\in (0,1]$, so that (\ref{eq:Ffunction}) becomes: 
	\begin{equation}
		\begin{split}
F_\beta(x) = e^{ -R x ^b} \frac{1}{(1+x)^\beta} .
		\end{split}
	\end{equation}
We may assume, without loss of generality, that $\beta>6$. We will suppress the dependence of the $\mc{F}$-norm on the variables:
	\begin{equation}
		\begin{split}
\norm{ \cdot } _{ Rx^b, \beta} = \norm{\cdot}_F. 
		\end{split}
	\end{equation}

\begin{definition}[Equivalence in an SPT phase]\label{def:equivalence}
Define: 
	\begin{equation}\label{assumptions2}
		\begin{split}
\mf{B}(F ) = \bigg{ \{ } \Phi \in \Gamma(\zz) : ~& (i) ~ \norm{\Phi}_F < \infty, ~ (ii)~  \Phi \text{ has a unique ground state, }\\ & \hspace{13mm} (iii)~ \forall X \in P_f(\zz), ~ \Xi( \Phi(X)) = \Phi(X)      \bigg{  \} }.
		\end{split}
	\end{equation}
Define an equivalence relation $\approx$ on $\mf{B}(F)$ in the following way: $\Phi_0 \approx \Phi_1$ if there exists an interpolating path $s\mapsto \Phi(\cdot, s) \in \mf{B}(F)$ such that:

	\begin{gather}\label{assumptions}
	\begin{split}
(iv) & ~ \text{ for each }X \in P_f(\zz), ~ s  \mapsto \Phi(X,s) \text{ is continuously differentiable }\\
\\
(v) & \sup _{x,y\in \zz} \sum _{\substack{X\in P_f(\zz) \\ x,y\in X}}\sup_{s\in [0,1]}  \bigg{(} \frac{\norm{\Phi(X,s)}+ |X| \norm{\Phi'(X,s)}}{F( |x-y|)  } \bigg{)} < \infty\\
\\
(vi)& ~ \text{ the $\gamma>0$ and $R_\gamma$ in the uniform gap condition (\ref{eq:gap}) for $\Phi(\cdot, s)$ can be taken independent of $s$}.
	\end{split}
	\end{gather}
\end{definition}
Assumption (iii) of Definition \ref{def:equivalence} implies $\omega_\Phi(\Xi(A^*)) = \omega_\Phi(A)$, where $\omega_\Phi$ is the unique ground state of some representative $\Phi$. Condition (iv) of Definition \ref{def:equivalence} specifies the smoothness of the local Hamiltonians, and (v) is an assumption on the uniform spatial decay of the interactions. Precisely, (v) is sufficient decay to guarantee that the generated spectral flow will be a quasi-local map.

\subsection{Extension of the $\zz_2$ index} 

We first describe the $\zz_2$-index defined by Ogata and defer to \cite{O} for the details. Suppose $\Psi \in \mf{B}(F)$ is finite-range with pure gapped ground state $\varphi$. Since the entanglement entropy of $\varphi$ is bounded, it follows by \cite{M3} that $\varphi \sim \varphi|_{\A_L} \otimes \varphi|_{\A_R}$; and if $(\pi_R, \mf{H}_R, \Omega_R)$ is the associated cyclic representation of $\varphi |_{\A_R}$, then $\pi_R ( \A_R)''$ is a Type I factor. Hence we may assume there is an isomorphism $\iota: \pi_R( \A_R) '' \to B(\mf{K})$ for some Hilbert space $\mf{K}$. Since $\varphi|_{\A_R}$ is $\Xi$-invariant, $\Xi$ defines a unique antilinear $*$-automorphism $\hat{\Xi}$ of $B(\mf{K})$ satisfying:
	\begin{equation}
		\begin{split}
\forall A \in \A_R: ~ \hat{\Xi} \circ \iota( \pi_R(A)) = \iota \bigg{(} \pi_R \circ \Xi(A) \bigg{)}, \text{ and } \hat{\Xi}^2 = \mathrm{id}.
		\end{split}
	\end{equation}
By Wigner's theorem for antilinear $*$-automorphisms, there exists an antiunitary $J_{\pi_R} $ on $\mf{K}$, unique up to phase, such that $\hat{\Xi}(T) = J_{\pi_R} ^* T J_{\pi_R}$. Evidently $J_{\pi_R}^2 \in \set{-1,1}$, and Theorems 2.2 and 2.6 of \cite{O} show that the quantity $J_{\pi_R}^2$ does not depend on $\mf{K}$ and is an invariant of the $\approx$ relation restricted to finite-range interactions. The $\zz_2$-index is thus defined by Ogata as $\hat{\sig}_\Psi = J_{\pi_R}^2$. The extension is straightfoward to define. 

\medskip

\begin{definition}[cf. Definition 3.3 of \cite{O}]\label{def:z2-ext}
\noindent For $\Phi \in \mf{B}(F)$ with pure ground state $\omega$ such that $\omega \sim \omega|_{\A_L} \otimes \omega|_{\A_R}$, define: 
	\begin{equation}\label{eq:ext-index}
		\begin{split}
\hat{\sig}_{\Phi} = J _{\Phi}^2 \in \set{-1,1}
		\end{split}
	\end{equation}
where $J_{\Phi}$ is an antiunitary implementing the extension $\hat{\Xi}$ of time reversal to the von Neumann algebra generated by the associated cyclic representation of $\omega |_{\A_R}$.  
\end{definition}

\begin{lemma}\label{lem:SPT1}
Suppose there exists $\Phi_0 \in \mf{B}(F)$ such that the unique ground state $\omega_0$ of $\Phi_0$ is quasi-equivalent to $\omega_0 |_{\A_L} \otimes \omega_0 |_{\A_R}$. Then $\hat{\sig}_\Phi$ is well-defined for all $\Phi \in \mf{B}(F)$ such that $\Phi \approx \Phi_0$. 
\end{lemma}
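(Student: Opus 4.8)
The plan is to reduce everything to the split property for the ground state $\omega$ of $\Phi$. Concretely, it suffices to show $\omega \sim \omega|_{\A_L} \otimes \omega|_{\A_R}$: once this holds, Definition \ref{def:z2-ext} applies verbatim, and the resulting $J_\Phi^2 \in \set{-1,1}$ is independent of the auxiliary data. Indeed, the Hilbert space $\mf{K}$ and the isomorphism onto $B(\mf{K})$ are fixed up to unitary equivalence by the GNS construction for $\omega|_{\A_R}$, and such a unitary conjugates the antiunitaries implementing $\hat\Xi$, hence preserves their squares; moreover $\hat\Xi$ itself is well-defined because $\omega|_{\A_R}$ is $\Xi$-invariant, which follows from clause (iii) in the definition of $\mf{B}(F)$ together with the identity $\omega_\Phi(\Xi(A^*)) = \omega_\Phi(A)$. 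Note also that $\omega_0$, being the unique ground state of $\Phi_0$, is pure and hence a factor state, and by hypothesis $\omega_0 \sim \omega_0|_{\A_L} \otimes \omega_0|_{\A_R}$; thus $\omega_0$ is a split factor state whose split is realized by the restrictions to $\A_L$ and $\A_R$.

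Next I would exploit $\Phi \approx \Phi_0$. By Definition \ref{def:equivalence} (and symmetry of $\approx$) this furnishes an interpolating path $s \mapsto \Phi(\cdot, s) \in \mf{B}(F)$ with $\Phi(\cdot,0) = \Phi_0$, $\Phi(\cdot,1) = \Phi$, satisfying the smoothness, decay, and uniform-gap conditions (iv)--(vi). These are precisely the hypotheses under which the spectral flow (quasi-adiabatic continuation) of \cite{BMNS, NSY} produces quasi-local dynamics $\tau : [0,1] \to \Aut(\A_\zz)$, generated by an interaction $\Psi(\cdot,t)$, such that the ground state of $\Phi(\cdot,1) = \Phi$ is $\omega = \omega_0 \circ \tau_1$. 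The step requiring care is the decay of $\Psi$: its $\mc{F}$-norm is controlled by the convolution of the decay of $\Phi(\cdot,s)$ and $\Phi'(\cdot,s)$ — this is where the weight $|X|$ in clause (v) enters — against the spectral-flow kernel, whose subexponential tail dominates any power law. Bookkeeping constants as in the proof of Lemma \ref{lem:LR2}, one loses only a bounded number of powers, so the standing assumption $\beta > 6$ guarantees $\norm{\Psi}_{0,\beta'} < \infty$ for some $\beta' > 3$; in particular $\tau$ satisfies the hypotheses of Theorem \ref{theorem:stability-of-split}.

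Finally I would apply Corollary \ref{cor:restrictions} with the split factor state $\omega_0$ and the dynamics $\tau$ just constructed, obtaining $\omega = \omega_0 \circ \tau_1 \sim \omega|_{\A_L} \otimes \omega|_{\A_R}$. Since $\tau_1$ is a $*$-automorphism and $\omega_0$ is pure, $\omega$ is pure, hence a factor state whose restriction to $\A_R$ has a Type I GNS von Neumann algebra; combined with the $\Xi$-invariance noted in the first paragraph, this is exactly the input needed for Definition \ref{def:z2-ext}, and so $\hat\sig_\Phi$ is well-defined. The only genuine obstacle I anticipate is the decay estimate in the second step — confirming that, starting from an $\mc{F}$-norm with exponent $\beta > 6$ and subexponential weight $h(x) = R x^b$, the spectral-flow generator $\Psi$ still has finite $\norm{\cdot}_{0,\beta'}$ with $\beta' > 3$ — since the remainder of the argument is an assembly of the cited spectral-flow results together with Theorem \ref{theorem:stability-of-split} and Corollary \ref{cor:restrictions}.
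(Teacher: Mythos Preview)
Your approach is essentially the paper's: reduce to showing $\omega \sim \omega|_{\A_L}\otimes\omega|_{\A_R}$, produce the spectral flow $\alpha_s$ from the interpolating path, check that its generating interaction has enough decay to invoke Theorem~\ref{theorem:stability-of-split} (and Corollary~\ref{cor:restrictions}), and conclude via the Type~I factor property. The paper compresses the well-definedness step by citing Theorem~2.2 of \cite{O} and the decay step by citing Theorem~6.14 of \cite{NSY}, but the skeleton is the same.

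One point to sharpen: your accounting for the decay of the spectral-flow interaction $\Psi$ is phrased as ``losing a bounded number of powers'' so that $\beta>6$ yields $\beta'>3$. That is not really the mechanism. What drives the estimate is the subexponential weight $h(x)=Rx^{b}$ in the $\mc{F}$-function: the Lieb--Robinson localization inherits an $e^{-Rx^{b}}$ factor, and after optimizing against the almost-exponential tail of $W_\gamma$ (Lemma~\ref{lem:BMNS}) one obtains $\norm{\Psi}_{h',\beta}<\infty$ with a new subexponential weight $h'(x)=O\!\big(x^{b}/\log^{2}(x^{b})\big)$, as the paper states. This of course implies $\norm{\Psi}_{0,\beta'}<\infty$ for every $\beta'$, so your conclusion stands; but the role of the hypothesis $\beta>6$ is not to survive a loss of three powers here---it is used elsewhere (e.g.\ in Lemmas~\ref{lem:generator-sp}--\ref{lem:generator-sp2}, where $F_{\beta-4}$ and $F_{\beta-5}$ appear). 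If you dropped the stretched-exponential weight and kept only polynomial decay of $\Phi$, the argument for $\Psi$ would not go through as you wrote it.
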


\begin{proof}
Let $\Phi(\cdot, s)$, $0\leq s \leq 1$, be an interpolating path in $\mf{B}(F)$ between $\Phi_0 = \Phi(\cdot, 0)$ and $\Phi_1 = \Phi(\cdot, 1)$. By Theorem 2.2 of \cite{O}, it suffices to show that the GNS representation of the right chain restriction of $\omega$, the pure ground state of $\Phi$, generates a Type I factor. But Theorem 6.14 of \cite{NSY}, and the assumptions in (\ref{assumptions2}) and  (\ref{assumptions})  imply that the interaction $\Psi(s)$ which generates the spectral flow $\alpha_s \in \Aut (\A_\zz)$ of the family $\Phi(\cdot, s)$ satisfies the hypotheses of Theorem \ref{theorem:stability-of-split}; we may take $h(x) = O\big{(} x^b/ \log^2(x^b) \big{)}$.  

\end{proof}

\begin{proposition}\label{prop:SPT2}
If $\Phi\in \mf{B}(F)$ and  $\Phi \approx \Phi_0$, then $\hat{\sig}_{\Phi_0} = \hat{\sig}_{\Phi}$. 
\end{proposition}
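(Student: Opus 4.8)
The plan is to show that the $\zz_2$-index is constant along the interpolating path $\Phi(\cdot,s)$, $0\le s\le 1$, in $\mf{B}(F)$ connecting $\Phi_0$ and $\Phi$. The key structural fact is that the family $\Phi(\cdot,s)$ generates a spectral flow (quasi-adiabatic continuation) $\alpha_s\in\Aut(\A_\zz)$ such that $\omega_s=\omega_0\circ\alpha_s$ is the unique ground state of $\Phi(\cdot,s)$; this is the content of Theorem 6.14 of \cite{NSY} together with assumptions (iv)--(vi), and the generating interaction $\Psi(s)$ of $\alpha_s$ inherits a superpolynomial--subexponential decay bound ($h(x)=O(x^b/\log^2(x^b))$), so it satisfies the hypotheses of Theorem \ref{theorem:stability-of-split}. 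Consequently, by Lemma \ref{lem:SPT1} and Theorem \ref{theorem:stability-of-split}, every $\omega_s$ is split and factor, so $\hat\sig_{\Phi(\cdot,s)}$ is well-defined for every $s$, and it remains only to see it cannot jump.

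First I would record that it suffices to prove local constancy: $s\mapsto\hat\sig_{\Phi(\cdot,s)}$ is locally constant on $[0,1]$, hence constant by connectedness, since $\hat\sig$ takes values in the discrete set $\set{-1,1}$. To get local constancy, fix $s_0$ and consider a short subinterval $[s_0,s_0+\del]$; the restricted path is again an interpolating path in $\mf{B}(F)$ with its own spectral flow $\alpha_{s_0,s}$ satisfying $\omega_s=\omega_{s_0}\circ\alpha_{s_0,s}$, and the generator again has decay good enough for Theorem \ref{theorem:stability-of-split}. So the problem reduces to: if $\om$ and $\vphi=\om\circ\tau_1$ are two $\Xi$-invariant pure split factor states related by $\Xi$-invariant quasi-local dynamics $\tau$ with decay as in Theorem \ref{theorem:stability-of-split}, then $\hat\sig_\om=\hat\sig_\vphi$. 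This is precisely the analogue of Ogata's Theorem 2.2/2.6 of \cite{O} for the finite-range case, and I would follow that argument: the automorphism $\tau_1$ intertwines the right-chain GNS representations up to the split-property equivalence, and because $\tau$ commutes with $\Xi$ (the generating interaction is $\Xi$-fixed, being built from $\Phi(\cdot,s)$ which is), the induced isomorphism of the Type I factors $\pi_R(\A_R)''$ for $\om$ and $\vphi$ intertwines the extensions $\hat\Xi$ of time reversal. An intertwining isomorphism of Type I factors carrying one antilinear involution to another forces the implementing antiunitaries to have equal squares, so $J_\om^2=J_\vphi^2$.

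The main obstacle is the middle step: one must actually construct the isomorphism between $\pi_R^{\om}(\A_R)''$ and $\pi_R^{\vphi}(\A_R)''$ and check it is $\Xi$-equivariant, despite $\tau_1$ not being an automorphism of $\A_R$ alone. The standard device, following \cite{O}, is to factor $\tau_1$ as a ``left $\otimes$ right'' piece $\tau_1^\cup=\tau_1^L\otimes\tau_1^R$ up to an inner perturbation that is negligible for the quasi-equivalence class --- this is exactly what the estimates in the proof of Theorem \ref{theorem:stability-of-split} establish, via the Gronwall bound on $\norm{\tau_t(A)-\tau_t^\cup(A)}$ and the split-property approximation by conditional expectations onto annuli. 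One then uses $\tau_1^R\in\Aut(\A_R)$, which does descend to an automorphism of the right algebra and commutes with $\Xi|_{\A_R}$ (since $\Phi^R$ is $\Xi$-fixed), to transport the representation and the antiunitary; the ``$\cup$ versus genuine'' discrepancy only changes the state within its quasi-equivalence class and hence does not affect the Type I factor or its antilinear symmetry. Assembling these pieces---local constancy, reduction to the two-state automorphic-equivalence statement, the $\tau^\cup$-factorization from Theorem \ref{theorem:stability-of-split}, and Ogata's rigidity of $J^2$ under equivariant isomorphisms of Type I factors---yields $\hat\sig_{\Phi_0}=\hat\sig_\Phi$.
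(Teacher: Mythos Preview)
Your approach is correct in spirit but differs substantively from the paper's route. The paper, following Ogata's finite-range argument closely, proves the stronger structural fact that $\alpha_s \circ [(\alpha_s^L)^{-1} \otimes (\alpha_s^R)^{-1}]$ is an \emph{inner} automorphism of $\A_\zz$: it shows that the sequence $g_n(s) = D_{[-n,n]}(s) - D_{[-n,n]}^\cup(s)$ of differences of Hastings generators is uniformly Cauchy in $C([0,1],\A_\zz)$ and hence converges to a self-adjoint $V(s) \in \A_\zz$, which is precisely the content of the technical Lemmas \ref{lem:generator-sp} and \ref{lem:generator-sp2}. Innerness then feeds directly into Ogata's invariance machinery.

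You instead bypass innerness and work only at the level of quasi-equivalence of states: from Corollary \ref{cor:restrictions} one has $\omega_s|_{\A_R} \sim \omega_0|_{\A_R} \circ \alpha_s^R$, and since $\alpha_s^R$ commutes with $\Xi|_{\A_R}$, the normal isomorphism coming from quasi-equivalence identifies the two copies of $\hat\Xi$ on $B(\mf{K})$; Ogata's Theorem~2.2 then gives equality of the $J^2$'s. This is a genuinely softer route that avoids the generator estimates of Lemmas \ref{lem:generator-sp}--\ref{lem:generator-sp2} entirely, and it is valid. Its advantage is economy: the result follows almost immediately from what is already established in Section~2. The paper's route, while more laborious, yields the independently interesting statement that the coupling automorphism is inner --- the direct analogue of Ogata's Lemma~5.1 for non-finite-range interactions.

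One correction to your write-up: you assert that the estimates in Theorem \ref{theorem:stability-of-split} show $\tau_1$ factors as $\tau_1^\cup$ ``up to an inner perturbation.'' They do not --- Theorem \ref{theorem:stability-of-split} and Corollary \ref{cor:restrictions} establish only quasi-equivalence of the relevant \emph{states}, not innerness of $\tau_1 \circ (\tau_1^\cup)^{-1}$ as an automorphism; proving innerness is exactly what the additional Lemmas \ref{lem:generator-sp} and \ref{lem:generator-sp2} are for. Fortunately your argument in the final step only actually \emph{uses} quasi-equivalence, so this is a terminological slip rather than a logical gap. (A second minor point: the $\Xi$-equivariance of $\alpha_s^R$ comes from $\Xi$-invariance of the spectral-flow interaction $\Psi$, not of $\Phi^R$ directly.)
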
	

\begin{proof}
The proof direction is essentially due to Ogata in \cite{O}, and so we prove in detail only the necessary modifications to handle unbounded range of interaction. It is sufficient to show that the composition $\alpha_s \circ  [(\alpha_{s}^L)^{-1} \otimes (\alpha_{s}^R)^{-1}] $ is an inner automorphism, for all $s\in [0,1]$. Here we take the spectral flow maps to be generated by an interpolating curve $\Phi(\cdot, s)$ as in Lemma \ref{lem:SPT1}.

\medskip 

 Let $\gamma$ denote the uniform gap of the $\Phi(s)$. We show that there exists a continuous family $V(s) = V(s)^* \in \A_\zz$ such that in the uniform topology: 
	\begin{equation}\label{eq:lim}
		\begin{split}
\lim _{n \to \infty} D_{[-n,n]}(s) - D_{[-n,n]}^\cup(s) = V(s) .   
		\end{split}
	\end{equation}  
$D_{[-n,n]}(s)$ denotes the Hastings generator defined in (\ref{eq:Hastings}) of the Appendix. This implies the composition $\alpha_s \circ  [(\alpha_{s}^L)^{-1} \otimes (\alpha_{s}^R)^{-1}] $ is inner. To do this, define $g_n \in C( [0,1] , \A_\zz)$ by:
	\begin{equation}
		\begin{split}
g_n(s) = D_{[-n,n]} (s) - D_{[-n,n]}^\cup (s).
		\end{split}
	\end{equation} 
We will prove that the sequence $g_n(s)$ is uniformly Cauchy. Fix $N_0\in \nn$, and let $m,n\in \nn$ be such that $4N_0 < m \leq n$. Then: 
	\begin{equation}\label{eq:rewrite}
		\begin{split}
g_n(s) - g_m(s) & = \bigg{[} \int_{-\infty}^\infty dt~ { W_\gamma}(t)( \tau _t^{n,s} - \tau_t^{m,s}) \bigg{(} \sum _{ X \subset [-N_0, N_0]} \Phi '(X,s) \bigg{)} \\
 & \hspace{20mm} - \int _ {-\infty}^\infty W_\gamma(t)  (\tau_t^{\cup,n,s} - \tau_t^{\cup,m,s}) \bigg{(} \sum _{X \subset [-N_0, N_0]}(\Phi^\cup)'(X,s)\bigg{)} \bigg{]} \\
\\
& \hspace{25mm} + \mc{R}(n,m,N_0, s)  
		\end{split}
	\end{equation}
where $\mc{R}(n,m,N_0,s)$ is defined to be the remainder difference between $g_n(s) - g_m(s)$ and the bracketed quantity in (\ref{eq:rewrite}). Using Lemmas \ref{lem:generator-sp} and \ref{lem:generator-sp2} and the simple bound $\norm{ H_{[-N_0, N_0]}( \Phi' (\cdot, s))} \leq 3 N_0 \norm{\Phi'}_F$: 
	\begin{equation}
		\begin{split}
\norm{g_n(s) - g_m(s)} \leq 2 \big{(} 3N_0\Omega_1 (N_0) +  \Omega_2( N_0) \big{)}\norm{ \Phi'}_F 
		\end{split}
	\end{equation}
which tends to $0$ uniformly in $s$ as $N_0\to \infty$.
\end{proof}

We conclude this section with the necessary technical lemmas used in the proof of Proposition \ref{prop:SPT2}, which prove bounds analogous to those in the proof of Lemma 5.1 of \cite{O} but remain valid for interactions which are not finite-range but decay by (\ref{assumptions}). We freely use the function $I_\gamma$ defined in Lemma \ref{lem:BMNS} of the Appendix. 

\begin{lemma}\label{lem:generator-sp}
Let $\gamma>0$ and $W_\gamma$ be the weight function in (\ref{eq:Hastings}). Let $\Psi: P_f(\zz) \to \A_{loc}$ be an interaction such that $\norm{\Psi}_F<\infty$ with generated time-independent dynamics $\tau: \rr \to \Aut(\A_\zz)$. Let $\tau^n$ denote the finite-volume time-independent dynamics generated by $\Psi$ in the interval $[-n,n]$.

\medskip 

If $N, K\in \nn$ and $N\leq K$, then for all $A \in \A_{[-N,N]}$ and $n\geq m > 2K$: 

	\begin{equation}\label{eq:generator-decay}
		\begin{split}
\norm{ \int _{-\infty} ^\infty dt ~ W_\gamma(t) ( \tau_t ^n - \tau_t^m) (A) } \leq \Omega_1 \big{(}K-N \big{)}  \norm{A} 
		\end{split}
	\end{equation} 
for the decaying function:
	\begin{equation}
		\begin{split}
\Omega_1(x) & = 4 I_\gamma ( R x^b / 2\nu ) +  (\pi^2/6)^2 \bigg{(} 10 \norm{W_\gamma(t) t }_{L^1} + 2 \frac{ \kappa(\beta) \norm{W_\gamma}_{L^\infty}}{\nu} \bigg{)} \norm{ \Psi}_F e^{ - \frac{R x^b }{2\nu}}
		\end{split}
	\end{equation}
\end{lemma}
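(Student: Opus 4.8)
\textbf{Proof proposal for Lemma \ref{lem:generator-sp}.}

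The plan is to reduce the bound on $\int W_\gamma(t)(\tau_t^n - \tau_t^m)(A)\,dt$ to two separate estimates: one controlling the finite-volume dynamics difference $\tau_t^n - \tau_t^m$ for times $|t|$ up to some cutoff $T$, and one controlling the tail of the integral for $|t| > T$ using the rapid decay of the weight function $W_\gamma$. First I would write $(\tau_t^n - \tau_t^m)(A)$ via the usual interpolation/Duhamel formula as an integral over $s \in [0,t]$ of $\tau_s^n$ applied to the commutator $i[H_{[-m,m]}(\Psi) - H_{[-n,n]}(\Psi), \tau_{t-s}^m(A)]$, so that the difference is governed by terms $[\Phi(Z,\cdot), \tau_{t-s}^m(A)]$ with $Z$ meeting the annular region $[-n,-m]\cup[m,n]$ (the sites present in the larger box but not the smaller). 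Since $\supp(A) \subset [-N,N]$ and $m > 2K \geq 2N$, any such $Z$ is at distance at least $m - N$ from $\supp(A)$; Corollary \ref{lem:LR} (in its $\La \to \zz$ form, or directly Lemma \ref{lem:LR2}) then bounds each commutator by $\kappa(\beta)\norm{A}\norm{\Phi(Z)}_{[0,1]}(e^{\nu|t-s|}-1) F_{\beta-2}(\mathrm{dist})$, and summing over $Z$ using $\norm{\Psi}_F < \infty$ produces a bound of the shape $\kappa(\beta)\norm{\Psi}_F\norm{A}(e^{\nu|t|}-1) F_{\beta-2}(m-N)$, up to the combinatorial factor from counting boundary sets, which is where the $(\pi^2/6)^2$ constant will appear (a convergent sum over the two half-lines of $1/(1+r)^{\beta - 2}$-type terms, using $\beta > 6$).

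The subtlety is that $e^{\nu|t|}-1$ grows exponentially, so the $t$-integral against $W_\gamma(t)$ does not converge if one uses this Lieb-Robinson bound for all $t$. Here I would split: for $|t| \leq R(K-N)^b/(2\nu)$ the exponential is controlled, $e^{\nu|t|} \leq e^{R(K-N)^b/2}$, and combining with the spatial decay factor $F_{\beta-2}(m-N) \leq F_{\beta-2}(K-N)$ (valid since $m > 2K > K+N$, so $m - N > K - N$) gives the $\norm{W_\gamma(t)t}_{L^1}$-weighted term with the net decay $e^{-R(K-N)^b/(2\nu)}$ after absorbing half of the $e^{-R x^b}$ from $F_{\beta-2}$. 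For $|t| > R(K-N)^b/(2\nu)$ one discards the Lieb-Robinson information entirely and uses the trivial bound $\norm{(\tau_t^n - \tau_t^m)(A)} \leq 2\norm{A}$ together with the tail integral $\int_{|t| > x} W_\gamma(t)\,dt$, which by the defining properties of $W_\gamma$ (its Fourier transform is the Hastings function, and it has sub-exponential tails) is exactly $I_\gamma$ evaluated at the cutoff — this yields the $4 I_\gamma(R(K-N)^b/2\nu)$ term. The constant $\norm{W_\gamma}_{L^\infty}/\nu$ arises from the $(e^{\nu|t|}-1)/\nu$ antiderivative near $t=0$ where the linear-in-$t$ weight is not needed.

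I expect the main obstacle to be bookkeeping rather than conceptual: precisely tracking the combinatorial constants from the boundary-set sums so that they close into the stated $\Omega_1$ with the clean factor $(\pi^2/6)^2$, and making sure the subadditivity $h(a+b) \leq h(a) + h(b)$ (which for $h(x) = Rx^b$, $b \in (0,1]$, holds) is used correctly when peeling off $e^{-Rx^b/2}$ from $F_{\beta-2}$ to feed the tail estimate — one needs $F_{\beta-2}(y) \leq e^{-Ry^b/2} F_{\beta-2}(y)^{1/2}\cdot(\text{poly})$ type manipulations, or more simply $F_{\beta-2}(y) \le e^{-Ry^b}$ and then split the exponent. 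A secondary point is confirming that the $\La \to \zz$ limit remarked after Corollary \ref{lem:LR} legitimately applies to the finite-volume dynamics $\tau^n, \tau^m$ uniformly, which is immediate since the Lieb-Robinson constants $\kappa(\beta), \nu$ in Lemma \ref{lem:LR2} are independent of the box. Once $\Omega_1$ is in hand, its decay to $0$ as $x \to \infty$ is clear from $I_\gamma(x) \to 0$ and $e^{-Rx^b/2\nu} \to 0$.
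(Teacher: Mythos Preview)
Your overall architecture --- Duhamel for $\tau_t^n - \tau_t^m$, Lieb--Robinson control for $|t| \le T$, the trivial bound $2\norm{A}$ plus the tail $I_\gamma(T)$ for $|t| > T$, and the choice $T = R(K-N)^b/(2\nu)$ --- matches the paper exactly. But there is a genuine gap in the commutator estimate.

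You assert that every $Z$ appearing in $H_{[-n,n]}(\Psi) - H_{[-m,m]}(\Psi)$ is at distance at least $m - N$ from $\supp(A) \subset [-N,N]$. This is false. Such a $Z$ must intersect the annulus $[-n,-m-1]\cup[m+1,n]$, but nothing prevents $Z$ from also reaching into $[-N,N]$; in that case $\mathrm{dist}(Z,\supp A) = 0$ and Corollary~\ref{lem:LR} yields nothing. The Hamiltonian difference is \emph{not} supported in the annulus, so ``sum over $Z$ using $\norm{\Psi}_F<\infty$ to get $F_{\beta-2}(m-N)$'' does not go through as stated.

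The paper repairs this with a further split. Writing, for the left half, $\min Z = a \le -m-1$ and $\max Z = b \le 0$, it separates $b \le a/2$ from $a/2 < b \le 0$. In the first regime $Z$ stays to the left of $a/2 \le -m/2 < -K$, so $\mathrm{dist}(Z,[-N,N]) \ge K-N$ and Lieb--Robinson gives the $F_{\beta-2}(K-N)$ decay. In the second regime the diameter $|b-a| \ge |a|/2 > K$ is large, so one discards Lieb--Robinson entirely and uses the trivial commutator bound together with interaction decay $\norm{\Psi(a;b)} \le \norm{\Psi}_F F_\beta(|b-a|)$; this produces the $\norm{W_\gamma(t)t}_{L^1}$ term. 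Sets $Z$ crossing the origin (the paper's index set $\mc{C}$) have diameter at least $m$ and are handled the same way. The double sum over $a,b$ is what costs the extra two powers --- the paper lands on $F_{\beta-4}$, not $F_{\beta-2}$, which is why $\beta > 6$ matters --- and is the true source of the $(\pi^2/6)^2$ constant. Without this dichotomy between ``far $Z$'' (Lieb--Robinson) and ``long $Z$'' (interaction decay), your bound does not close.
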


\begin{proof}
Let $T>0$ be a positive parameter. We can find a bound for the integral:
	\begin{equation}
		\begin{split}
\norm{ \int _{-T}^T dt ~ W_\gamma(t) (\tau_t^n - \tau_t^m) (A) } & \leq \int _{-T}^T dt ~ |W_\gamma(t)| \int _0^{|t|} dr ~ \norm{ [ H_{[-n,n]}(\Psi) - H_{[-m,m]}(\Psi), \tau_r ^m (A) ]}
		\end{split}
	\end{equation}
by further dividing the difference of the local Hamiltonians as: 
\begin{equation}
		\begin{split}
H_{[-n,n]}(\Psi) - H_{[-m,m]}(\Psi) =\sum _{X \in \mc{L}} \Psi(X) + \sum _{Y \in \mc{R}} \Psi(Y) + \sum _{Z \in \mc{C}} \Psi(Z)
		\end{split}
	\end{equation}
for index sets defined: 
	\begin{gather}
	\mc{L} = \set{ X \subset [-n,0] :  X \cap [-n, -m-1] \not = \emptyset  } \hspace{5mm}
	\mc{R} = \set{ Y \subset [0,n] :  ~ Y \cap  [m+1,n] \not = \emptyset  } 
	\\
	\mc{C} = \set{ Z \subset [-n,n]: Z \cap A(m,n) \not = \emptyset, ~ Z \cap (-\infty, 0] \not = \emptyset, ~ Z \cap (0, \infty) \not = \emptyset }.
	\end{gather}
First 	we bound the contribution from $\mc{L}$. For $a,b\in \zz$ such that $-n \leq a \leq -m-1$ and $a \leq b \leq 0$, denote:
	\begin{equation}
		\begin{split}
\Psi(a;b) = \sum \set{ X \in \mc{L} : \min X = a, ~ \max X = b}.
		\end{split}
	\end{equation}
Then:
	\begin{equation}\label{eq:sum1}
		\begin{split}
\int _{ - T}^T dt ~ |W_\gamma(t)| \int _0^{|t|} dr ~ \norm{ \bigg{[} \sum _{ X \in \mc{L}} \Psi(X), \tau_r^m (A) \bigg{]} } & \leq \int _{-T}^T dt ~ |W_\gamma(t)| \int _0 ^{|t|} dr ~ \sum _{ - n \leq a \leq -m-1}~ \sum _{a \leq b \leq 0} \norm{ [\Psi(a;b), \tau_r^m(A)]}.
		\end{split}
	\end{equation}	
Using Lemma \ref{lem:LR}, 
	\begin{equation}\label{eq:sum2}
		\begin{split}
\sum _{ - n \leq a \leq -m-1}~ \sum _{a \leq b \leq 0} \norm{ [\Psi(a;b), \tau_r^m(A)]} & \leq  \sum _{ - n \leq a \leq -m-1}~\bigg{(}  \sum _{a \leq b \leq a/2} \norm{ [\Psi(a;b), \tau_r^m(A)]} +  \sum _{a/2 < b \leq 0} \norm{ [\Psi(a;b), \tau_r^m(A)]}  \bigg{)} \\ 
& \leq (\pi^2/6)^2 \norm{\Psi}_F \norm{A} \bigg{(} \kappa(\beta)(e^{ \nu |r|}-1) F_{\beta-4}(K-N) + 2 F_{\beta-4}( K)    \bigg{)}.
		\end{split}
	\end{equation}	
Denote $\mc{I}_L = \int _{-T}^T dt ~ |W_\gamma(t)| \int _0 ^{|t|} dr ~ \sum _{ - n \leq a \leq -m-1}~ \sum _{a \leq b \leq 0} \norm{ [\Psi(a;b), \tau_r^m(A)]} $. Substituting (\ref{eq:sum2}) into (\ref{eq:sum1}) yields:  
		\begin{equation}\label{eq:sum3}
		\begin{split}
\mc{I}_L & \leq  (\pi^2/6)^2  \norm{ \Psi}_F \norm{A} \bigg{(} 2\norm{ W_\gamma(t) t }_{L^1} F_{\beta-4} ( K) + \frac{\kappa(\beta)\norm{W_\gamma}_{L^\infty} }{\nu} e^{ \nu T } F_{\beta-4}(K-N) \bigg{)} .
		\end{split}
	\end{equation}
By symmetry, if $\mc{I}_R$ is the corresponding integral using the interaction on $\mc{R}$, then (\ref{eq:sum3}) holds with $\mc{I}_R$ in place of $\mc{I}_L$. Next we bound the contribution from $\mc{C}$. But since these sets in $\mc{C}$ have diameter of at least $2K$, 
	\begin{equation}
		\begin{split}
\norm{ \sum _{Z \in \mc{C}} \Psi(Z)} \leq 3 (\pi^2 / 6)^2 \norm{ \Psi}_F F_{\beta-4} ( 2K).
		\end{split}
	\end{equation}
Hence we have the inequality: 
	\begin{equation}
		\begin{split}
\norm{ \int _{-\infty} ^\infty dt ~ W_\gamma(t) ( \tau_t ^n - \tau_t^m) (A) } & \leq 4 \norm{A}  I_\gamma(T) + [ 10 (\pi^2 / 6)^2  \norm{W_\gamma (t) t }_{L^1} ] \norm{\Psi}_F \norm{A} F_{\beta-4} (K) \\
& \hspace{10mm}+ \bigg{[} 2(\pi^2 /6)^2 \frac{ \kappa(\beta) \norm{W_\gamma}_{L^\infty}}{\nu} \bigg{]} \norm{\Psi}_F \norm{A} e^{ \nu T} F_{\beta-4}(K-N). 
		\end{split}
	\end{equation}
Setting $T = \frac{ R (K-N)^b}{2\nu}$ yields (\ref{eq:generator-decay}). 
\end{proof} 

It can be shown that $\lim _{n\to \infty} \int dt~ W_\gamma(t) \tau_t^n (A) = \int dt~ W_\gamma(t) \tau_t(A)$, although we do not use this fact here.  

\begin{lemma}\label{lem:generator-sp2}
Let $\gamma, W_\gamma$, and $\Psi$ be the same as in Lemma \ref{lem:generator-sp}. Suppose $K\in \nn$ and $K<n$. Then:  
	\begin{equation}\label{eq:main-ineq}
		\begin{split}
\norm{ \int _{-\infty}^\infty dt ~ W_\gamma (t) (\tau_t^n - \tau_t^{\cup, n})\bigg{(} \sum _{ \substack{ Z \subset [-n,n] \\ Z \not \subset [-K, K]}} \Psi(Z) \bigg{)} } \leq \Omega_2(K) \norm{\Psi}_F
		\end{split}
	\end{equation}
where $\tau^n, \tau^{\cup,n}$ are generated by $\Psi$ and $\Psi^\cup$, respectively, and $\Omega_2$ is the decaying function:
	\begin{equation}\label{eq:Q}
		\begin{split}
\Omega_2(x) & = 6  x \sum _{\substack{m\in \nn \\ m \geq x}}  I_{\gamma} \bigg{(} \frac{R}{2} (m/4)^b \bigg{)}+ \sum _{\substack{m\in \nn \\ m \geq x}} Q( m)
\\
Q(y)& =  (\pi^2/6)^4 \bigg{(} \frac{12 \kappa(\beta) }{\nu} \norm{\Psi}_F \norm{W_\gamma}_\infty  + 10 \max \set{\norm{\Psi}_F ^1, \norm{\Psi}_F^2 } \norm{ W_\gamma (t) |t| }_{L^1} \bigg{)} e^{  - \frac{R}{2} (y/4)^b }.
		\end{split}
	\end{equation}
\end{lemma}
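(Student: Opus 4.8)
The plan is to convert the difference $\tau_t^n-\tau_t^{\cup,n}$ into a Duhamel integral against the part of $\Psi$ straddling the cut $\{0\}\,|\,\{1\}$, to break the Hamiltonian tail $A:=\sum\{\Psi(Z):Z\subset[-n,n],\ Z\not\subset[-K,K]\}$ into scales, and on each scale to play the spatial Lieb--Robinson decay of Corollary~\ref{lem:LR} against the temporal decay of the weight $W_\gamma$; the architecture is that of the proof of Lemma~\ref{lem:generator-sp}. Set $B_n:=H_{[-n,n]}(\Psi)-H_{[-n,n]}(\Psi^\cup)=\sum\{\Psi(Z):Z\subset[-n,n],\ Z\cap(-\infty,0]\neq\emptyset,\ Z\cap[1,\infty)\neq\emptyset\}$. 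Since $\tau^n,\tau^{\cup,n}$ are generated on $[-n,n]$ by $\Psi,\Psi^\cup$, for any $C\in\A_{[-n,n]}$ one has $\tau_t^n(C)-\tau_t^{\cup,n}(C)=i\int_0^t\tau_{t-s}^n\big([B_n,\tau_s^{\cup,n}(C)]\big)\,ds$, hence $\norm{\tau_t^n(C)-\tau_t^{\cup,n}(C)}\le\int_0^{|t|}\norm{[B_n,\tau_s^{\cup,n}(C)]}\,ds$. Feeding this into $\int W_\gamma(t)(\,\cdot\,)\,dt$ and splitting the $t$-integral at a scale-dependent cutoff is the engine of the argument.

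For the scale decomposition I would write $A$ as the telescoping sum $A=\sum_{m=K}^{n-1}\Delta_m$, $\Delta_m:=H_{[-(m+1),m+1]}(\Psi)-H_{[-m,m]}(\Psi)$; each $\Delta_m$ is a sum of $\Psi(Z)$ with $Z\subset[-(m+1),m+1]$ containing $m+1$ or $-(m+1)$, so estimating $\sum_{Z\ni m+1}\norm{\Psi(Z)}$ against $\norm{\Psi}_F$ gives $\norm{\Delta_m}\lesssim\norm{\Psi}_F$ uniformly in $m$ and $n$. I would then split $\Delta_m=\Delta_m^{\mrm{loc}}+\Delta_m^{\mrm{tail}}$ by discarding the summands with $\diam Z>m/2$: a discarded $Z$ contains $\pm(m+1)$ and has diameter $>m/2$, so by the decay of $\Psi$ the discarded part has norm $\lesssim\norm{\Psi}_F\,e^{-\frac R2(m/4)^b}$, which is summable in $m$ and feeds $\sum_{m\ge K}Q(m)$, whereas $\Delta_m^{\mrm{loc}}$ is supported in the annular region $\an(m/2,m+1)$, hence at distance $\ge m/2$ from the cut, and still obeys $\norm{\Delta_m^{\mrm{loc}}}\lesssim\norm{\Psi}_F$.

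At a fixed scale $m\ge K$ I would split the $t$-integral at $T_m$ with $\nu T_m=\frac R2(m/4)^b$. On $\{|t|>T_m\}$, use $\norm{\tau_t^n(\Delta_m^{\mrm{loc}})-\tau_t^{\cup,n}(\Delta_m^{\mrm{loc}})}\le2\norm{\Delta_m^{\mrm{loc}}}$ together with $\int_{|t|>T_m}|W_\gamma(t)|\,dt\le I_\gamma(T_m)$ from Lemma~\ref{lem:BMNS}. On $\{|t|\le T_m\}$, apply the Duhamel inequality and split $B_n=B_n^{\mrm{in}}+B_n^{\mrm{out}}$, where $B_n^{\mrm{in}}$ collects the straddling $Z\subset[-K,K]$, of total operator norm $\le3K\norm{\Psi}_F$ by the interval bound already used in the proof of Theorem~\ref{theorem:stability-of-split}, and $B_n^{\mrm{out}}$ collects the straddling $Z\not\subset[-K,K]$, which have $\diam Z>K$ and hence total norm $\lesssim\norm{\Psi}_F\,e^{-\frac R2(m/4)^b}$ (using $m\ge K$). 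Once the support of $\Delta_m^{\mrm{loc}}$ is separated from $[-K,K]$ by $\ge m/4$, Corollary~\ref{lem:LR} gives $\norm{[B_n^{\mrm{in}},\tau_s^{\cup,n}(\Delta_m^{\mrm{loc}})]}\le3K\norm{\Psi}_F\norm{\Delta_m^{\mrm{loc}}}\kappa(\beta)(e^{\nu|s|}-1)F_{\beta-2}(m/4)$; carrying out the $ds$ and the $\int_{|t|\le T_m}|W_\gamma(t)|\,dt$ integrations, using $e^{\nu T_m}F_{\beta-2}(m/4)\le e^{-\frac R2(m/4)^b}(1+m/4)^{-(\beta-2)}$, and finally bounding $e^{-\frac R2(m/4)^b}\lesssim I_\gamma\big(\frac R2(m/4)^b\big)$ (the weight tail decaying slower than any pure exponential), one obtains the $6K\,\norm{\Psi}_F\,I_\gamma(\frac R2(m/4)^b)$ scale-$m$ contribution; the $B_n^{\mrm{out}}$ term, the far-straddling estimate, and the $\{|t|>T_m\}$ tail together give the $Q(m)$ contribution, with the $\frac{\kappa(\beta)}{\nu}\norm{W_\gamma}_\infty$ piece coming from Lieb--Robinson and the $\norm{W_\gamma(t)|t|}_{L^1}$ piece from the remainder, while the finitely many scales $K\le m\lesssim K$ at which $\an(m/2,m+1)$ is not yet separated from $[-K,K]$ are handled crudely and absorbed into the $6K$ prefactor. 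Summing over $K\le m\le n-1$ and noting that every bound is independent of $n$ yields $\norm{\int W_\gamma(t)(\tau_t^n-\tau_t^{\cup,n})(A)\,dt}\le6K\norm{\Psi}_F\sum_{m\ge K}I_\gamma\big(\frac R2(m/4)^b\big)+\norm{\Psi}_F\sum_{m\ge K}Q(m)=\Omega_2(K)\norm{\Psi}_F$, the two series converging by the subexponential decay of $I_\gamma$ and the stretched-exponential decay of $Q$.

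The hard part will be keeping everything uniform in $n$: both $A$ and the straddling interaction $B_n$ have operator norm growing linearly in $n$, so nothing may be estimated crudely on them. The telescoping decomposition cuts $A$ into pieces of $n$-independent norm, and at each scale one may replace $B_n$ by $B_n^{\mrm{in}}$ of norm $O(K)$ rather than $O(n)$ precisely because the commutator with the localized $\Delta_m^{\mrm{loc}}$ annihilates the rest through Corollary~\ref{lem:LR}. The one genuinely delicate calibration is the cutoff $T_m$, which must be large enough that $e^{\nu T_m}F_{\beta-2}(m/4)$ decays in $m$ yet small enough that $\sum_m I_\gamma(T_m)$ still converges --- the choice $\nu T_m=\frac R2(m/4)^b$ achieves both --- and one uses $\beta>6$ to soak up the polynomial factors shed along the way.
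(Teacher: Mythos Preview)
Your architecture is right and parallels the paper's: Duhamel against the straddling interaction $B_n$, a scale decomposition of the tail Hamiltonian, and a scale-dependent time cutoff $T_m$. The gap is in the split $B_n=B_n^{\mrm{in}}+B_n^{\mrm{out}}$ at the \emph{fixed} window $[-K,K]$. Your claim $\norm{B_n^{\mrm{out}}}\lesssim\norm{\Psi}_F\,e^{-\frac{R}{2}(m/4)^b}$ ``using $m\ge K$'' is false: the straddling $Z\not\subset[-K,K]$ have diameter $>K$, giving only $\norm{B_n^{\mrm{out}}}\lesssim\norm{\Psi}_F\,e^{-cK^b}$, a quantity \emph{independent of $m$}, and this is not dominated by $e^{-\frac{R}{2}(m/4)^b}$ once $m$ is much larger than $K$. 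Lieb--Robinson cannot rescue this term either, since $B_n^{\mrm{out}}$ contains interaction terms whose support meets $\an(m/2,m+1)$. The crude commutator bound therefore contributes $\sim\norm{\Psi}_F^2\,e^{-cK^b}$ at \emph{every} scale $K\le m\le n-1$, and summing over the $n-K$ scales destroys uniformity in $n$---precisely the difficulty you flagged as ``the hard part''.

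The fix, which is what the paper does, is to make the split of $B_n$ depend on the scale: at each $m$ (in the paper, at the outer endpoint $a$ of each interaction term) separate the straddling $Z$ according to whether $Z\subset I_a=[-|a|/4,|a|/4]$. Then the ``out'' part has norm $\lesssim\norm{\Psi}_F\,F_{\beta-2}(m/4)$, summable in $m$, while the ``in'' part sits at distance $\ge m/4$ from $\supp{\Delta_m^{\mrm{loc}}}$ for \emph{every} $m\ge K$ (so no crude handling of the small scales $K\le m\lesssim K$ is needed either), and its norm $\lesssim m\norm{\Psi}_F$ is absorbed by the polynomial in $F_{\beta-2}(m/4)$ since $\beta>6$. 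With this change $Q(m)$ emerges directly from the $|t|\le T_m$ integral, without the detour through $e^{-x}\lesssim I_\gamma(x)$---a claim that would anyway require a lower bound on $I_\gamma$ not supplied by Lemma~\ref{lem:BMNS}.
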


\begin{proof}
First, let $J, m\in \nn$ be natural numbers such that $J < m \leq n$. Denote: 
$$\mc{I}_{m,J}= \norm{ \int _{-\infty}^\infty dt ~ W_\gamma (t) (\tau_t^n - \tau_t^{\cup, n})\bigg{(} \sum _{ \substack{ Z \subset [-m,m] \\ Z \not \subset [-J, J]}} \Psi(Z) \bigg{)}}.$$   
Furthermore, denote:
	\begin{gather}
\mc{B} = \set{X \subset [-n,n]: X \cap [-n, 0] \not = \emptyset, X \cap (0, n] \not = \emptyset} \\
\mc{D} = \set{ Z \subset [-m,m]: Z \cap \an(J+1, m) \not = \emptyset }.
	\end{gather}
Then for $T>0$, as in Lemma \ref{lem:generator-sp}, 
	\begin{equation}\label{eq:integral}
		\begin{split}
\mc{I}_{m,J}& \leq 6 (m-J) \norm{\Psi}_F I_\gamma( T)  + \int _{-T}^T dt ~ |W_\gamma(t)| \int _0^{|t|}dr ~ \norm{ \bigg{[}  \sum _{X \in \mc{B}} \Psi(X), \tau_r^n\bigg{(} \sum _{Z \in \mc{D}} \Psi(Z) \bigg{)}   \bigg{]} }. 
		\end{split}
	\end{equation}
As before, we separate the sum $\sum _{Z \in \mc{D}} \Psi(Z)$ into left, right and centrally localized terms of the interaction: 
	\begin{gather}
\sum _{Z \in \mc{D}} \Psi(Z)  = \sum _{X \in \mc{L}} \Psi(X) + \sum _{Y \in \mc{R}} \Psi(Y) + \sum _{ Z \in \mc{C}} \Psi(Z) \\
\mc{L}  = \set{X \in \mc{D}: X \subset [-m, 0]  }, \hspace{ 5mm} \mc{R}  = \set{ Y \in \mc{D} : Y \subset (0, m] } \\
\mc{C}  = \set{ Z \in \mc{D} : Z \cap [-m, 0]\not = \emptyset, ~ Z \cap (0, m] \not = \emptyset } .
	\end{gather}	
We first control the contribution to the integral from $\mc{L}$. We start this by gathering the interactions of $\mc{L}$ by intervals into $\Psi_\mc{L}(a;b) = \sum \set{ \Psi(X): X \in \mc{L}, ~ \min X = a, ~ \max X = b}$: 
	\begin{equation}\label{eq:Lcont}
		\begin{split}
\sum _{W \in \mc{L}} \Psi(W) & = \sum _{ \substack{-m \leq a \leq -J-1 \\ a \leq b \leq a/2 }} \Psi_\mc{L} (a; b)  + \sum _{ \substack{-m \leq a \leq -J-1 \\ a/2 < b \leq 0 }} \Psi_\mc{L} (a; b) := \Psi_\mc{L}^1 +  \Psi_\mc{L}^2 .
		\end{split}
	\end{equation}
Let $I _a = [ - |a/4|, |a/4| ]$. Then: 
	\begin{equation}
		\begin{split}
\norm{ \bigg{[} \tau_r^n (  \Psi_\mc{L}^1  ), \sum _{ X\in \mc{B}} \Psi(X)  \bigg{]} } & \leq \sum _{ -m \leq a \leq -J-1}\bigg{(}  \norm{ \bigg{[}\sum _{a \leq b \leq a/2}  \tau_r^n(\Psi_{\mc{L}}(a; b)) , \sum _{X \in \mc{B}: X \subset I_a} \Psi(X) \bigg{]} }\\
& \hspace{35mm} +  \norm{ \bigg{[}\sum _{a \leq b \leq a/2}  \tau_r^n(\Psi_{\mc{L}}(a; b)) , \sum _{X \in \mc{B}: X \not \subset I_a} \Psi(X) \bigg{]} } \bigg{)}. \\ 
		\end{split}
	\end{equation}
By applying Lieb-Robinson bounds, the following inequality is valid:
	\begin{equation}
		\begin{split}
\norm{ \bigg{[}\sum _{a \leq b \leq a/2}  \tau_r^n(\Psi_{\mc{L}}(a; b) ), \sum _{X \in \mc{B}: X \subset I_a} \Psi(X) \bigg{]} } &\leq \sum _{a \leq b \leq a/2} \kappa(\beta) \norm{\Psi}_F^2 F_{\beta}(|b-a|) |a| (e^{\nu |r|}-1) F_{\beta-2}(|a/4|)\\
 & \leq  \frac{\pi^2}{6} \kappa(\beta) \norm{\Psi}_F^2 (e^{\nu |r|} - 1) \frac{ |a|}{(1+ |a/4|)^{3}} F_{\beta-5} ( |J/4|) 
		\end{split}
	\end{equation}
and the right-hand side is summable in $|a|$. And by both decay of the interaction and application of Lieb-Robinson bounds, 
	\begin{equation}
		\begin{split}
 \norm{ \bigg{[}\sum _{a \leq b \leq a/2}  \tau_r^n(\Psi_{\mc{L}}(a; b) ), \sum _{X \in \mc{B}: X \not \subset I_a} \Psi(X) \bigg{]} }  & \leq   \norm{ \bigg{[}\sum _{a \leq b \leq a/2}  \tau_r^n(\Psi_{\mc{L}}(a; b)) , \sum _{\substack{-n \leq c < a/4 \\ 0 \leq d \leq n }} \Psi_\mc{B}(c;d) \bigg{]}} \\
 & \hspace{25mm} +  \norm{ \bigg{[}\sum _{a \leq b \leq a/2}  \tau_r^n(\Psi_{\mc{L}}(a; b)) , \sum _{\substack{ a/4 \leq c \leq 0 \\ |a/4| < d \leq n}} \Psi_\mc{B}(c;d)  \bigg{]} }\\
 & \leq 2\kappa(\beta) (\pi^2 / 6)^3  \norm{\Psi}_F^2  \frac{1}{(1+ |a/4|)^2}\bigg{(}  e^{ \nu |r| } F_{\beta-2} ( J/4) \bigg{)}
		\end{split}
	\end{equation}
where $\Psi_{\mc{B}}(c;d)$ is defined as $\Psi_{\mc{L}}(a;b)$ only with respect to the index set $\mc{B}$. Hence:
	\begin{equation}
		\begin{split}
\norm{ \bigg{[} \tau_r^n (  \Psi_\mc{L}^1  ), \sum _{ X\in \mc{B}} \Psi(X)  \bigg{]} }  \leq 6\kappa(\beta) (\pi^2/6)^4 \norm{\Psi}_F^2  e^{ \nu |r|} F_{\beta-5} ( J/4).
		\end{split}
	\end{equation}
And again by decay of the interaction:
	\begin{equation}\label{eq:Psi2}
		\begin{split}
\norm{ \bigg{[} \tau_r^n (  \Psi_\mc{L}^2  ), \sum _{ X\in \mc{B}} \Psi(X)  \bigg{]} } & \leq 2\norm{\tau_r^n (  \Psi_\mc{L}^2  )}\norm{ \sum _{ X\in \mc{B}} \Psi(X)   } \leq2 ( \pi^2 / 6)^4 \norm{\Psi}_F^3 F_{\beta-4}(J/2)  .
		\end{split}
	\end{equation}	
By symmetry on the chain about $0$, this majorizes the contribution from $\mc{R}$ as well. And decay of the interaction also yields a bound on the contribution from $\mc{C}$ in the same manner as in (\ref{eq:Psi2}):
	\begin{equation}
		\begin{split}
\norm{ \bigg{ [ } \sum _{X \in \mc{B}} \Psi(X) , \tau_r^n \bigg{(}\sum _{Z \in \mc{C}} \Psi(Z) \bigg{)} \bigg{]} }\leq 3 (\pi^2/6)^4 \norm{\Psi}_F ^2 F_{\beta-4} (J).
		\end{split}
	\end{equation}
Hence if we set $T = \frac{R}{2\nu} (J/4)^b$, the integral expression $$\mc{J} = \int _{-T}^T dt ~ |W_\gamma(t)| \int _0^{|t|} dr ~ \norm{ \bigg{[}  \sum _{X \in \mc{B}} \Psi(X), \tau_r^n\bigg{(} \sum _{Z \in \mc{D}} \Psi(Z) \bigg{)}   \bigg{]} }$$ of the right-hand side of the inequality (\ref{eq:integral}) is bounded: 
	\begin{equation} \label{eq:J-ineq}
		\begin{split}
\mc{J} & \leq \frac{12 \kappa(\beta) (\pi^2/6)^4}{\nu}  \norm{\Psi}_F^2 \norm{W_\gamma}_\infty e^{  - \frac{R}{2} (J/4)^b } \\
& \hspace{35mm}+ 10(\pi^2/6)^4 \max \set{\norm{\Psi}_F ^2, \norm{\Psi}_F^3 } \norm{ W_\gamma (t) |t| }_{L^1} F_{\beta-4}(J/2). \\
		\end{split}
	\end{equation}
The right-hand side of the inequality (\ref{eq:J-ineq}) is bounded above by $Q(J)$, as defined in (\ref{eq:Q}). Then (\ref{eq:integral}) continues as: 
	\begin{equation}
		\begin{split}
\mc{I}_{m,J} \leq 6 (m-J) \norm{\Psi}_F I_\gamma \bigg{(}  \frac{R}{2} (J/4)^b \bigg{)} + Q(J).
		\end{split}
	\end{equation}
Now we prove the inequality (\ref{eq:main-ineq}). There exists a maximal $M_0 \in \nn$ such that $n > M_0 K$, and so:
	\begin{equation}
		\begin{split}
\mc{I}_{n,K} & = \mc{I}_{2K, K} + \mc{I}_{3K, 2K} + \ldots + \mc{I}_{M_0 K , (M_0-1)K} + \mc{I}_{n, M_0 K} \\
& \leq 6 \norm{ \Psi}_F K \sum _{j=1}^{ M_0}  I_{\gamma} \bigg{(} \frac{R}{2} (jK/4)^b \bigg{)}+ \sum_{j=1}^{M_0}Q(jK) .
		\end{split}
	\end{equation}
Set $H_\gamma(x) = 6 \norm{ \Psi}_F x \sum _{\substack{m\in \nn \\ m \geq x}}  I_{\gamma} \bigg{(} \frac{R}{2} (m/4)^b \bigg{)}+ \sum _{\substack{m\in \nn \\ m \geq x}} Q( m) .$
\end{proof}

\medskip 

\subsection{Acknowledgments} The author thanks B. Nachtergaele for many helpful comments and discussions on the subject of this manuscript. This work was supported by the NSF grant DMS 1813149 and a Simons-CRM research grant during the 2018 Thematic Semester, \textit{Mathematical challenges in many-body physics and quantum information}, at the Centre de recherches math\'ematiques.

\medskip



\section{Appendix: Generator of the spectral flow}

In this appendix we briefly recall notations and properties of the spectral flow. For a more detailed analysis of quasi-locality and symmetries of the spectral flow, see e.g. Sections 6 and 7 of \cite{NSY} and Proposition 5.4 of \cite{BMNS}. In finite volume $\La$, the spectral flow is implemented for gapped, continuously differentiable families of Hamiltonians $H_\La(s)$ by unitiaries solving:
	\begin{equation}
		\begin{split}
\frac{d}{ds} U_\La(s) = i D_\La(s) U_\La(s), ~ U_\La(0) = \1
		\end{split}
	\end{equation}
for the Hastings generator:
	\begin{equation}\label{eq:Hastings}
		\begin{split}
D_\La(s) = \int _{-\infty}^\infty dt~ W_\gamma(t) \tau_t ^{\La,s} \bigg{(} \frac{d}{ds} H_\La(s) \bigg{)} .
		\end{split}
	\end{equation}
Here $\gamma>0$ refers to the uniform gap of the $H_\La(s)$, and $W_\gamma \in L^1 \cap L^\infty$ is chosen as the odd function, positive on $(0,\infty)$ from Equation (2.12) of \cite{BMNS}. Explicit estimates on the integral $I_\gamma(t) = \int _t ^\infty dr~ W_\gamma(r) \geq 0$ are known:

\begin{lemma}[Lemma 2.6 of \cite{BMNS}]\label{lem:BMNS}
For $t> 36058$, 
	\begin{equation}
		\begin{split}
I_\gamma(t) \leq [130 e^2 \gamma^9] t^{10} \exp\bigg{(} -\frac{2}{7} \frac{\gamma t}{ (\ln(\gamma t))^2} \bigg{)} .
		\end{split}
	\end{equation}
\end{lemma}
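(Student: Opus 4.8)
This final statement is the cited estimate of \cite{BMNS} on the tail $I_\gamma(t) = \int_t^\infty W_\gamma(r)\,dr$ of the quasi-adiabatic weight function, and the plan is to derive it from the Ingham-type decay estimate for the probability density underlying the construction of $W_\gamma$ in Equation (2.12) of \cite{BMNS}. First I would reduce to the case $\gamma = 1$: the weights scale as $W_\gamma(t) = W_1(\gamma t)$, which matches how the family $\{W_\gamma\}$ is defined and is consistent with the identity $\widehat{W_\gamma}(\omega) = -i/\omega$ on $\{|\omega|\ge\gamma\}$ and with $W_\gamma$ being odd; hence $I_\gamma(t) = \gamma^{-1}I_1(\gamma t)$, and the claim reduces to showing $I_1(s) \le 130\,e^2\,s^{10}\exp(-\tfrac{2}{7} s/(\ln s)^2)$ for $s > 36058$, whereupon substituting $s = \gamma t$ reproduces exactly the factors $\gamma^9$ and $t^{10}$ in the statement.

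Next I would exploit the structure of $W_1$: it is built (following Hastings) from an even probability density $w_1 \ge 0$ whose Fourier transform is $C^\infty$ and supported in $[-1,1]$, via $W_1(t) = \int_t^\infty w_1(\xi)\,d\xi$ for $t > 0$ extended to an odd function — which is exactly why $W_1 > 0$ on $(0,\infty)$. By Fubini, for $s>0$, $I_1(s) = \int_s^\infty (\xi - s)\,w_1(\xi)\,d\xi \le \int_s^\infty \xi\, w_1(\xi)\,d\xi$, so the entire estimate reduces to a pointwise bound $w_1(\xi) \le P(\xi)\exp(-\tfrac{2}{7} \xi/(\ln\xi)^2)$ with $P$ a fixed polynomial (degree $8$ suffices). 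The passage from this pointwise bound to the bound on $I_1$ is a routine Laplace-type argument — with $\phi(\xi) = \tfrac{2}{7} \xi/(\ln\xi)^2$ one splits $\int_s^\infty$ at $2s$, using $e^{-\phi(\xi)}\le e^{-\phi(s)}$ on $[s,2s]$ and $\phi(\xi)\ge\phi(s) + \tfrac{\xi}{2}\phi'(\xi)$ on $[2s,\infty)$ — and it costs at most two further powers of $s$, landing at total degree $10$; the explicit threshold $s > 36058$ is precisely what makes $\ln s > 0$, all the majorizations valid, and the numerical constant come out to $130\,e^2$.

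The pointwise decay of $w_1$ is the technical core. Since $\widehat{w_1}$ is realized as (a normalization of) the convergent product $\prod_{k \ge k_0}\frac{\sin(a_k\omega)}{a_k\omega}$ with $a_k$ of order $1/(k(\ln k)^2)$ chosen so that $\sum_k a_k = 1$, the density $w_1$ is the convergent convolution of the uniform densities $\frac{1}{2a_k}\1_{[-a_k,a_k]}$, so I would bound $w_1(t)$ by bounding the inverse transform of the product: keep the factor $(a_k|t|)^{-1}$ for exactly the indices $k \le N^*(t)$ with $a_k|t| \ge 1$, i.e.\ $N^*(t) \asymp |t|/(\ln|t|)^2$, and the trivial bound $1$ for $k > N^*(t)$. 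Taking logarithms gives $\sum_{k_0 \le k \le N^*}\big(\ln k + 2\ln\ln k - \ln(a_0|t|)\big)$; feeding in $\ln N^*(t) = \ln(a_0|t|) - 2\ln\ln(a_0|t|) + o(1)$ together with Euler--Maclaurin estimates of $\sum\ln k$ and $\sum\ln\ln k$ \emph{with explicit error terms}, the $N^*\ln N^*$, $2N^*\ln\ln N^*$ and $-N^*\ln(a_0|t|)$ contributions cancel through next-to-leading order, leaving $-N^*(t)(1+o(1)) = -\tfrac{2}{7} |t|/(\ln|t|)^2(1+o(1))$, with the residual sums $\sum_{k\le N^*}O(1)$ and $\sum_{k\le N^*}\ln\ln k$ absorbed into the polynomial $P$; the normalization $\sum_k a_k = 1$ (i.e.\ the precise choice of $a_0$ and $k_0$) is what pins the exponent constant at exactly $\tfrac{2}{7}$.

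The hard part is this last step: obtaining \emph{any} subexponential decay for $w_1$ is classical (Ingham, and it is exactly what allows $\widehat{w_1}$ to be simultaneously $C^\infty$ and compactly supported), but extracting the precise exponent $\tfrac{2}{7}$ and a clean polynomial degree forces a carefully chosen truncation point $N^*(t)$ and non-asymptotic, error-controlled estimates of $\sum_{k\le N^*}\ln k$ and $\sum_{k\le N^*}\ln\ln k$, together with honest bookkeeping of the two powers of $t$ lost in passing from $w_1$ to $W_1$ to $I_1$ and of the numerical constants $130$ and $e^2$. Everything else — the scaling reduction and the Laplace/Fubini step — is routine.
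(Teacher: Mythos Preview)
The paper does not prove this lemma at all: it is quoted verbatim as Lemma~2.6 of \cite{BMNS} and simply cited, with no argument given. So there is nothing in the paper to compare your proposal against.

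That said, as a reconstruction of the argument in \cite{BMNS} your outline is on the right track --- the scaling reduction $I_\gamma(t)=\gamma^{-1}I_1(\gamma t)$, the relation $W_1(t)=\int_t^\infty w_1$, and the Ingham-type pointwise bound on $w_1$ via the infinite convolution $\prod_k \frac{\sin(a_k\omega)}{a_k\omega}$ are indeed the ingredients used there. A couple of cautions: the precise constant $\tfrac{2}{7}$ in \cite{BMNS} comes from their specific choice of the sequence $a_n$ (they take $a_n = a_1/(n\ln^2 n)$ for $n\geq 2$ with $a_1$ fixed by $\sum a_n = 1/2$, not $1$) together with an explicit lower bound on $a_1$, not from a generic cancellation of leading terms in an Euler--Maclaurin expansion; and the powers of $t$ and the constant $130e^2$ are obtained in \cite{BMNS} by a somewhat different bookkeeping (successive integrations of a pointwise bound on $w_1$ already carrying polynomial prefactors), rather than a Laplace-type splitting at $2s$. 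Your sketch would plausibly produce a bound of the same shape, but reproducing the \emph{exact} constants $\tfrac{2}{7}$, $130e^2$, degree $10$, and threshold $36058$ would require following the specific choices in \cite{BMNS} rather than the generic scheme you describe.
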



\begin{thebibliography}{30}

\bibitem{BMNS} S. Bachmann, S. Michalakis, B. Nachtergaele, and R. Sims. \textit{Automorphic Equivalence within Gapped Phases of Quantum Lattice Systems}. Commun. Math. Phys. (2012) 309: 35. https://doi.org/10.1007/s00220-011-1380-0

\bibitem{BN} S. Bachmann and B. Nachtergaele. \textit{On gapped phases with a continuous symmetry and boundary operators}. J. Stat. Phys., 154, 91-112 (2014) https://doi.org/10.1007/s10955-013-0850-5

\bibitem{BR1} O. Bratteli and D. W. Robinson. \textit{Operator algebras and quantum statistical mechanics}, 2 ed., vol. 1, Springer Verlag, 1997.

\bibitem{GW} Z.-C. Gu and X.-G. Wen. \textit{Tensor-entanglement-filtering renormalization approach and symmetry-protected topological order.} Phys. Rev. B 80, 155131. 26 October 2009. https://doi.org/10.1103/PhysRevB.80.155131

\bibitem{M} T. Matsui. \textit{Spectral gap, and split property in quantum spin chains}. J. Math. Phys. \textbf{51}, 015216 (2010).  https://doi.org/10.1063/1.3285046   

\bibitem{M2} T. Matsui. \textit{The Split Property and the Symmetry Breaking of the Quantum Spin Chain.} Commun. Math. Phys. (2001) 218:393.  https://doi.org/10.1007/s002200100413

\bibitem{M3} T. Matsui. \textit{Boundedness of entanglement entropy and split property of quantum spin chains.} Reviews in Mathematical Physics. Vol. 25, No. 09, 1350017 (2013) https://doi.org/10.1142/S0129055X13500177

\bibitem{MN} A. Moon and B. Nachtergaele. \textit{Stability of gapped ground state phases of spins and fermions in one dimension.} J. Math. Phys. 59, 091415 (2018).  https://doi.org/10.1063/1.5036751

\bibitem{N} P. Naaijkens. \textit{Haag duality and the distal split property for cones in the toric code}. Lett. Math. Phys. 101 (2012), 341 - 354. https://doi.org/10.1007/s11005-012-0572-7

\bibitem{NS2} B. Nachtergaele and R. Sims. (2009) \textit{Locality Estimates for Quantum Spin Systems.} In: Sidoravi?ius V. (eds) New Trends in Mathematical Physics. Springer, Dordrecht. https://doi.org/10.1007/978-90-481-2810-5\_39

\bibitem{NSY} B. Nachtergaele, R. Sims, and A. Young. \textit{Quasi-Locality Bounds for Quantum Lattice Systems. Part I. Lieb-Robinson Bounds, Quasi-Local Maps, and Spectral Flow Automorphisms.} arXiv:1810.02428v1

\bibitem{NT} M. Nakamura and S. Todo. \textit{Order Parameter to Characterize Valence-Bond-Solid States in Quantum Spin Chains.} Phys. Rev. Lett. 89, 077204. https://doi.org/10.1103/PhysRevLett.89.077204

\bibitem{O} Y. Ogata. \textit{A $\zz_2$-index of symmetry protected topological phases with time reversal symmetry for quantum spin chains.} arXiv:1810.01045v1 

\bibitem{OT} Y. Ogata and H. Tasaki. \textit{Lieb-Schultz-Mattis Type Theorems for Quantum Spin Chains Without Continuous Symmetry.} Commun. Math. Physics (2019). https://doi.org/10.1007/s00220-019-03343-5

\bibitem{PTBO} F. Pollmann, E. Berg, A.M. Turner, and M. Oshikawa. \textit{Entanglement spectrum of a topological phase in one dimension.} Phys. Rev. B 81, 064439.  https://doi.org/10.1103/PhysRevB.81.064439

\bibitem{PTBO2} F Pollmann, E. Berg, A.M. Turner, and M. Oshikawa, \textit{Symmetry protection of topological phase in one dimension}, Phys. Rev. B \textbf{85}, 075125 (2012). https://doi.org/10.1103/PhysRevB.85.075125

\bibitem{T} H. Tasaki. \textit{Topological Phase Transition and $\m{Z}_2$ index for $S=1$ Quantum Spin Chains.} Phys. Rev. Lett. 121, 140604. https://doi.org/10.1103/PhysRevLett.121.140604

\end{thebibliography}
\end{document}